\documentclass[hidelinks,letter, 10 pt, conference,doublecolomn]{ieeeconf}
\IEEEoverridecommandlockouts
\usepackage[dvipsnames]{xcolor}
\usepackage[T1]{fontenc}

\usepackage{mathptmx}

\usepackage{tikz}

\usetikzlibrary{matrix,positioning,fit,backgrounds}
\usepackage{lettrine}
\usepackage{balance}
\usepackage{graphicx}

\usepackage{bm}
\usepackage{amsmath}
\usepackage{amssymb}

\newcommand{\vect}[1]{\mathbf{#1}}
\newcommand{\mat}[1]{\mathbf{#1}}

\newcommand{\diffs}[3]{\frac{\partial^2 #1}{
\ifx#2#3 
\partial #2^2
\else
\partial #2 \partial #3
\fi
}}

\newcommand{\bv}{\vect{b}}

\newcommand{\ev}{\vect{e}}

\newcommand{\fv}{\vect{f}}
\newcommand{\gv}{\vect{g}}

\newcommand{\hv}{\vect{h}}
\newcommand{\kv}{\vect{k}}

\newcommand{\pv}{\vect{p}}

\newcommand{\qv}{{\vect{q}}}

\newcommand{\rv}{{\vect{r}}}

\newcommand{\sv}{\vect{s}}

\newcommand{\uv}{\vect{u}}
\newcommand{\vv}{\vect{v}}

\newcommand{\wv}{\vect{w}}

\newcommand{\xv}{\vect{x}}

\newcommand{\yv}{\vect{y}}

\newcommand{\SO}{\mathbf{SO}(3)}

\newcommand{\rhov}{\bm{\rho}}

\newcommand{\Omegav}{\bm{\Omega}}

\newcommand{\rank}{\operatorname{rank}}

\newcommand{\Phim}{\bm{\Phi}}

\newcommand{\sigmav}{\bm{\sigma}}

\newcommand{\tauv}{\bm{\tau}}

\newcommand{\Gammam}{\bm{\Gamma}}

\newcommand{\Am}{\mat{A}}
\newcommand{\Bm}{\mat{B}}

\newcommand{\Dm}{\mat{D}}

\newcommand{\Jm}{\mat{J}}

\newcommand{\Km}{\mat{K}}

\newcommand{\Rm}{\mat{R}}

\newcommand{\Tm}{\mat{T}}

\newcommand{\subparagraph}[1]{\noindent{#1.}}

\usepackage{etoolbox}

\usepackage{svg}
\usepackage{float}
\usepackage{hyperref}
\usepackage{academicons}
\usepackage[dvipsnames]{xcolor}
\usepackage{xcolor}
\usepackage{algorithm}
\usepackage{algpseudocode}

\usetikzlibrary{matrix, shapes.geometric, positioning, backgrounds}

\usepackage{siunitx}

\usepackage{amsthm}
\theoremstyle{plain}

\newtheorem{exmp}{\textbf{Example}}[section]

\newtheorem{thm}{Theorem}
\newtheorem{lem}{Lemma}

\newtheorem{defn}{Definition}
\newtheorem{prop}{Proposition}

\usepackage{comment}
\usepackage[font=small]{caption}
\usepackage{subcaption}
\usepackage{calc}
\newtheorem{rem}{\textbf{Remark}}[section]
\usepackage{mathtools} 
\usepackage{listings}
\usepackage{xcolor}

\usepackage{capt-of}
\renewcommand{\theenumi}{\roman{enumi}}

\usepackage{booktabs,tabularx,makecell}
\newcommand{\R}{\mathbb{R}}
\newcommand{\Exc}[1]{\!\{\,#1=0\,\}}

  \title{\Large \textbf{Input Dexterity and Output Negotiation in Feedback-Linearizable Nonlinear Systems}}
 
\usepackage{orcidlink}
\usetikzlibrary{shapes.geometric, positioning, matrix}

\author{Mirko Mizzoni$^{1,\orcidlink{0009-0006-2165-3475}}$, Pieter van Goor$^3$\orcidlink{0000-0003-4391-7014}, Barbara Bazzana$^1$\orcidlink{0000-0002-2843-4324}, and  Antonio Franchi$^{1,2,\orcidlink{0000-0002-5670-1282}}$
\thanks{$^1$Robotics and Mechatronics group, Faculty of Electrical Engineering,  Mathematics, and Computer Science (EEMCS), University of Twente, 7500 AE Enschede, The Netherlands. {\footnotesize \tt m.mizzoni@utwente.nl}, {\footnotesize } {\footnotesize \tt schol@r-franchi.eu}}
\thanks{$^2$Department of Computer, Control and Management Engineering, Sapienza University of Rome, 00185 Rome, Italy, {\footnotesize \tt s.orelli@uniroma1.it, schol@r-franchi.eu } {\footnotesize \tt }}
\thanks{$^{3}$ School of Aerospace, Mechanical, and Mechatronic Engineering (AMME), Faculty of Engineering, University of Sydney, NSW, 2006, Australia. {\footnotesize \tt pieter.vangoor@sydney.edu.au}}
\thanks{This work was partially funded by the Horizon Europe research agreement no. 101120732 (AUTOASSESS).}
}

\usepackage{environ}
\newtoggle{omitproofs}
\toggletrue{omitproofs}

\NewEnviron{optproof}[1][]{
  \iftoggle{omitproofs}{
    \begin{proof}
    \textit{The proof 
      \ifstrempty{#1}{}{can be found in~\cite{xx}}.}
       \vspace{-0.25em} 
    \end{proof}
  }{
    \begin{proof}
    \BODY
    \end{proof}
  }
}
\newif\ifarxiv
\arxivtrue 
\newcommand{\arxivtext}[1]{
  \ifarxiv
    #1
  \fi
}

\begin{document}

\maketitle
\toggletrue{omitproofs}
\arxivfalse

\togglefalse{omitproofs}

\begin{abstract}
We introduce a task–relative taxonomy of actuator inputs for nonlinear systems within the input–output feedback–linearization framework. Given a flat output specifying the task, inputs are classified as \emph{essential}, \emph{redundant}, or \emph{dexterity}: essential inputs are required for exact linearization, redundant inputs can be removed without effect, and dexterity inputs can be deactivated while preserving exact linearization of a reduced task.  We show that a subset is dexterity if and only if, under a suitable dynamic \emph{prolongation}, it can appear as additional output channels (\emph{flat–input complement}) on a common validity set. Whenever a family of systems obtained by (de)activating dexterity inputs admits a common prolongation, the family can be interpreted as a single prolonged system endowed with different output selections. This enables a unified linearizing controller that negotiates between full and reduced tasks without transients on shared outputs under compatibility and dwell-time conditions. 
Simulations on a fully actuated aerial platform illustrate graceful task downgrades from six-dimensional pose tracking as lateral-force channels are deactivated.
\end{abstract}

\section{Introduction}
Redundancy—having more independent actuation channels than minimally needed to regulate the outputs of interest—is a central motif across modern control applications. In input redundant nonlinear systems, an established architectural response is {control allocation}: a high‑level controller specifies virtual inputs (e.g., forces/moments for mechanical systems), which are then distributed to redundant actuators while honoring constraints, handling saturations, and often providing fault tolerance~ \cite{Isidori1995}. Mature surveys and monographs show how control allocation underpins practice in {aerospace}, {marine/surface}, and {automotive} systems, among others~ \cite{JohansenFossen_Automatica2013,Fossen_Book2011}. In underwater and surface vessels, redundancy is the norm and allocation is formulated as constrained optimization; real‑time and explicit solutions are now well documented~ \cite{FossenJohansen_UnderwaterSurvey2008,FossenJohansen_MED2006}. In ground vehicles, overactuation via brake/drive/steer subsystems motivates allocation methods (including MPC‑based allocation) for stability and path‑tracking at the limits~ \cite{VermillionCDC2007,EnergyEffAllocEV2010}. In aircraft and spacecraft, reconfigurable allocation is a workhorse for fault‑tolerant flight/attitude control~ \cite{AlmutairiAouf_AeroJ2017,ShenAutomatica2015}. Multi‑agent settings also blend redundancy and reconfiguration to maintain consensus or formation under actuator/sensor faults and cyber‑physical disruptions~ \cite{DengJAS2020,Wang_SciChinaTechSci2021}.

Yet, the prevailing allocation viewpoint remains {controller‑external}: redundancy is exploited to realize a {fixed} regulated output while optimizing secondary objectives and handling faults. What it does \emph{not} typically provide is a structural answer to a paramount task‑engineering question: 
\begin{quote}
``When an actuator (or a set thereof) must be intentionally deactivated, which {subset of the task} can still be achieved under exact input–output linearization, and how can one switch to that reduced task {without transients} on the components that remain regulated?''
\end{quote}
 Existing degraded‑mode designs (e.g., yaw‑free flight, relaxed hover) specify ad‑hoc reduced tasks and then build bespoke controllers for the post‑fault plant~ \cite{MuellerDA_IJRR2016}.

\subsubsection*{Goal and viewpoint.}
We develop a {task‑relative} taxonomy of actuator inputs for nonlinear input–affine systems within the input–output feedback‑linearization framework~ \cite{Isidori1995}. The task is encoded by a flat output. Relative to this task, we classify each input as:
\begin{itemize}
\item \emph{redundant}: its removal does not affect exact linearizability of the full task;
\item \emph{essential}: its removal precludes exact linearization of any nontrivial subset of the task;
\item \emph{dexterity}: its removal {still} permits exact linearization of a lower‑dimensional task.
\end{itemize}
This classification is explicitly {output‑dependent} (distinct from structural controllability) and directly answers: which parts of the task remain achievable under loss/deactivation of specific inputs?

\subsubsection*{Relation to state of the art.}
Classical control allocation (surveyed in~ \cite{JohansenFossen_Automatica2013,Fossen_Book2011}) assumes a fixed task and distributes effort among redundant actuators—possibly dynamically (cf.\  dynamic allocation and “weak/strong input redundancy” in~ \cite{Zaccarian_Automatica2009})—but it does not {characterize} when an actuator set can be removed while still preserving exact linearizability of a {subtask}, nor does it provide a structural mechanism to {swap} inputs with outputs in the regulated vector. 

Conversely, the literature on {flat inputs} (dual to flat outputs) shows how to {place} inputs to render a system flat~ \cite{NicolauRespondekBarbot_SIAMJCO2020}, but it does not address actuator {removal} in a given plant, nor the synthesis of {zero‑transient} switching among multiple flat outputs of a {common} prolonged system.

Recent work redefines {input redundancy} (IR) at the signal level: a system is IR if distinct inputs can yield the {same} output from the same initial state; LTI characterizations show this notion is {not} equivalent to classical null‑space/allocator views of over‑actuation, and IR may be {destroyed} by input/state constraints~ \cite{KreissTregouet_SCL2021,TregouetKreiss_Automatica2024}. 
By contrast, our framework is {task‑relative} to a chosen flat output: we classify inputs as {redundant/essential/dexterity} by whether exact input–output linearization survives under input removal after a suitable {dynamic prolongation} and by enabling input–output {negotiation}~ \cite{Isidori1995}. 
We further prove a structural equivalence—{dexterity} $\Leftrightarrow$ existence of a {flat‑input complement} under a {common prolongation}—and exploit it for {zero‑transient} switching among compatible tasks; IR does not address task negotiation or meld switching~ \cite{KreissTregouet_SCL2021,TregouetKreiss_Automatica2024}.

\subsubsection*{Contributions.}
\emph{(C1) Dexterity $\iff$ flat‑input complement under prolongation.}
We prove that a subset of inputs is \emph{dexterity} if and only if, on a nonempty intersection of validity sets, there exists a \emph{dynamic prolongation} $\ell$ for which the same plant admits a flat output where those inputs appear as channels in the output (\emph{flat‑input complement}). This gives a formal \emph{negotiation} mechanism: selected inputs can be exchanged with components of the original output while preserving exact linearizability.  
\emph{(C2) A common prolongation unifies actuation modes.}
Whenever the family of systems obtained by (de)activating dexterity inputs admits a \emph{common prolongation}, it can be interpreted as a single prolonged system endowed with different flat-output selections. Unlike allocation (which preserves a fixed task), this reinterprets “fully actuated vs.\ under-actuated modes” as different outputs of the same prolonged plant.

\emph{(C3) Zero‑transient switching and application.}
Building on our prior result on \emph{meld} switching~\cite{mizzoni2026switchingfeedbacklinearizingoutputsets}, we show that, under a common prolongation, one can switch between the original task and reduced tasks {without transients} on their shared components, provided the outputs are compatible on an overlapping validity set and a dwell‑time condition holds.

\subsubsection*{Organization.}
Section~\ref{sec:motv_exmp} gives the motivating example and Section~\ref{sec:example_observations} distills terminology; Section~\ref{subs:preliminaries} recalls feedback‑linearization and  Section~\ref{sec:notation} introduces notation; Section~\ref{sec:io-class} defines the input taxonomy, and Section~\ref{sec:FIC-equivalence} proves the dexterity–flat‑input‑complement equivalence; Secs.~\ref{sec:negotiation-control}–\ref{sec:switching-negotiable-common-ell} develop negotiation control, common prolongation, and switching guarantees; Section~\ref{sec:rigid-body} presents the rigid‑body case study; Section~\ref{sec:mec_wheel} discusses the Mecanum-wheel example, Section ~\ref{sec:ua-fa} compares the two scenarios and highlights the common interpretation.
Finally, Section~\ref{subs:conclusion} concludes the work.

\section{Motivating Example}\label{sec:motv_exmp}

Consider the linear input–affine system
\begin{equation}
\widetilde{\Sigma}:
\left\{
\begin{split}
\dot{x}_1&=x_2,\\
\dot{x}_2&=x_3+u_1+u_2,\\
\dot{x}_3&=x_4+u_2,\\
\dot{x}_4&=u_3+u_4
\end{split}
\right.
\end{equation}
with $n=4$ states and four inputs. Select first the $4$-dimensional output
$\widetilde{\mathbf y}=\mathbf x$. The associated decoupling matrix
\[
\widetilde{\Am}=
\begin{bmatrix}
1 & 1 & 0 & 0\\
1 & 1 & 0 & 0\\
0 & 1 & 0 & 0\\
0 & 0 & 1 & 1
\end{bmatrix}
\]
is singular, hence exact input–output linearization with a $4$-dimensional output is not possible.

We therefore consider a $3$-dimensional output
\[
\mathbf y=\begin{bmatrix}x_1 & x_3 & x_4\end{bmatrix}^\top.
\]
Its decoupling matrix
\[
\overline{\mathbf A}=
\begin{bmatrix}
1 & 1 & 0 & 0\\
0 & 1 & 0 & 0\\
0 & 0 & 1 & 1
\end{bmatrix}
\]
has rank $3$ and the sum of relative degrees is $|{\mathbf r}|=2+1+1=4=n$; thus $\mathbf y$ is a flat output for $\widetilde\Sigma$ (no zero dynamics).
Note that the third and fourth columns of $\overline{\mathbf A}$ coincide, i.e., $[\;0\;0\;1\;]^\top$, so one of $u_3,u_4$ is redundant for this choice of $\mathbf y$.
 Remove $u_4$ and consider the resulting \emph{square} system
\begin{equation}
\Sigma:
\left\{
\begin{split}
\dot{x}_1 &= x_2,\\
\dot{x}_2 &= x_3 + u_1 + u_2,\\
\dot{x}_3 &= x_4 + u_2,\\
\dot{x}_4 &= u_3,
\end{split}\right.
\label{eq:square_system}
\end{equation}
for which the same output $\mathbf y=[x_1\;x_3\;x_4]^\top$ remains flat, with
\(
{\mathbf A}=
\begin{bmatrix}
1 & 1 & 0\\
0 & 1 & 0\\
0 & 0 & 1
\end{bmatrix}
\)
invertible and $|{\mathbf r}|=4$.

We now ask what happens if we \emph{remove further inputs} from the square system \(\Sigma\).

\smallskip
\noindent\emph{Single removals.}
\begin{itemize}
\item Removing $u_3$ (\(\Sigma_{\overline{\{3\}}}\)): no two‑dimensional subset of entries of $\mathbf y$ achieves $|{\mathbf r}|=4$; exact linearization with no zero dynamics is impossible.
\item Removing $u_2$ (\(\Sigma_{\overline{\{2\}}}\)): the subset $\begin{bmatrix}x_1&x_3\end{bmatrix}^\top$ is flat; its decoupling matrix is the identity and $|{\mathbf r}|=2+2=4$.
\item Removing $u_1$ (\(\Sigma_{\overline{\{1\}}}\)): again, no two‑dimensional subset of entries of $\mathbf y$ reaches $|{\mathbf r}|=4$.
\end{itemize}

\noindent\emph{Double removals.}
\begin{itemize}
\item Removing $\{u_1,u_2\}$ (\(\Sigma_{\overline{\{1,2\}}}\)): the scalar output $y_1=x_1$ has relative degree $4$ and is flat (a single chain of four integrators to $u_3$).
\item Removing $\{u_1,u_3\}$ or $\{u_2,u_3\}$: the state $x_4$ becomes uncontrollable ($\dot{x}_4=0$), hence no scalar flat output of the form $y_i\in\{x_1,x_3,x_4\}$ exists.
\end{itemize}

This example will serve as a compact guide for the notions introduced in the next section.

\section{Immediate Observations and Terminology}\label{sec:example_observations}

The example above highlights four distinct phenomena that our theory will formalize.

\subsubsection*{(O1) Redundancy in a rectangular setting}
For the rectangular system \(\widetilde\Sigma\) with the $3$‑dimensional output \(\mathbf y=[x_1\;x_3\;x_4]^\top\), the last two columns of the decoupling matrix coincide; therefore any of the inputs \((u_3,u_4)\) is \emph{redundant} for this output (we removed, e.g., \(u_4\) to obtain the square system \(\Sigma\)).

\subsubsection*{(O2) Dexterity inputs (loss degree one)}
In the square system \(\Sigma\), removing \(u_2\) still permits exact linearization of a lower‑dimensional \emph{subset} of the original output, namely \([x_1\;x_3]^\top\) (two outputs instead of three). We call \(u_2\) a \emph{dexterity input of loss degree one} for \((\Sigma,\mathbf y)\): the controllable output dimension decreases by one upon its removal, yet exact input–output linearization (with no zero dynamics) is retained for the surviving subset.

\subsubsection*{(O3) Dexterity subsets (higher loss degrees)}
Removing the \emph{pair} \(\{u_1,u_2\}\) yields a system in which \(x_1\) alone is a flat output (loss of two output entries). Hence \(\{u_1,u_2\}\) is a \emph{dexterity subset of loss degree two}. Note that \(u_1\) by itself is \emph{not} a loss‑one dexterity input (its removal alone does not preserve a two‑dimensional flat output), but it belongs to a dexterity subset of cardinality two.

\subsubsection*{(O4) Essential inputs}
Input \(u_3\) is \emph{essential} for \((\Sigma,\mathbf y)\): every time it is removed, exact linearization with output subsets of \(\mathbf y\) fails (either the sum of relative degrees falls short of the state dimension, or controllability is lost).

\medskip
In summary:
\begin{itemize}
\item Redundancy (in a rectangular setting) does not imply dexterity in the square setting.
\item Dexterity is a \emph{task‑relative} notion: it is defined with respect to a given flat output \(\mathbf y\) and captures whether exact linearization survives input removals by allowing a \emph{reduced} output.
\item The \emph{loss degree} of an input is the minimal cardinality of a dexterity subset containing it (here: \(\deg(u_2)=1\), \(\deg(u_1)=2\), \(u_3\) is essential).
\end{itemize}

These observations motivate the formal definitions (dexterity subset, loss degree) and—crucially—the nonlinear developments in the remainder of the paper. In particular, we will show that in the nonlinear case the existence of dexterity subsets is \emph{equivalent} to the existence of a suitable \emph{common prolongation} (dynamic extension) under which one can negotiate components of the original output with selected input channels, and that this perspective enables \emph{zero‑transient} switching between full and reduced tasks under a single feedback‑linearizing controller.

\section{Preliminaries}\label{subs:preliminaries}

We recall standard notions from input–output feedback linearization and fix notation; see~ \cite[Ch.~5]{Isidori1995} for comprehensive background. We also summarize, in a self‑contained form, the specific switching fact from our prior work on \emph{melds} that will be used later; full details are in~ \cite{mizzoni2026switchingfeedbacklinearizingoutputsets}.

\subsubsection*{System model and outputs}
Throughout, statements and constructions are local on an open set $\mathcal{X}\subset\mathbb{R}^n$; all maps are $C^\infty$ unless otherwise stated.
Consider the smooth, input–affine, multivariable system
\begin{equation}
  \Sigma:\quad \dot{\xv} = \fv(\xv) + \sum_{i=1}^p \gv_i(\xv)\,u_i,
  \qquad \xv\in\R^n,\ \uv\in\R^p,
  \label{eq:sys}
\end{equation}
with smooth vector fields $\fv,\gv_i:\mathcal{X}\to\R^n$ defined on an open set $\mathcal{X}\subset\R^n$. 
Let the output be
\[
  \yv=\hv(\xv)=\begin{bmatrix}h_1(\xv)&\cdots&h_p(\xv)\end{bmatrix}^\top,\qquad 
  \hv:\mathcal{X}\to\R^p \text{ smooth}.
\]

\subsubsection*{Lie derivatives and vector relative degree}
For any smooth $h:\mathcal{X}\to\R$, denote by $L_{\fv} h$ and $L_{\gv_j}h$ the Lie derivatives of $h$ along $\fv$ and $\gv_j$, and by $L_{\fv}^k h$ the $k$-th iterated Lie derivative along $\fv$. The pair $(\Sigma,\yv)$ is said to have \emph{vector relative degree} 
\(
\rv=\begin{bmatrix}r_1&\cdots&r_p\end{bmatrix}^\top
\)
at $\xv^\circ\in\mathcal{X}$ if, for each $i\in\{1,\dots,p\}$:
\begin{itemize}
  \item $L_{\gv_j} L_{\fv}^{k} h_i(\xv)=0$ for all $j\in\{1,\dots,p\}$ and all $k<r_i-1$ in an open neighborhood of $\xv^\circ$ where the vector relative degree is constant;
  \item the \emph{decoupling matrix}
  \begin{equation}
    \Am_{(\Sigma,\yv^{(\rv)})}(\xv)
     \;:=\; \Big[\, L_{\gv_j} L_{\fv}^{\,r_i-1} h_i(\xv)\,\Big]_{i,j=1}^p,
     \label{eq:decoupling}
  \end{equation}
  is nonsingular at $\xv^\circ$.
\end{itemize}
If \eqref{eq:decoupling} holds, then the array of output derivatives at orders $\rv$ satisfies
\begin{equation}
  \yv^{(\rv)} 
  := \begin{bmatrix} y_1^{(r_1)} & \cdots & y_p^{(r_p)} \end{bmatrix}^\top
  \;=\; \bv(\xv) + \Am_{(\Sigma,\yv^{(\rv)})}(\xv)\,\uv,
  \label{eq:y_r_now}
\end{equation}
with
\(
  \bv(\xv):=\begin{bmatrix}
   L_{\fv}^{r_1}h_1(\xv)&\cdots&L_{\fv}^{r_p}h_p(\xv)
  \end{bmatrix}^\top.
\)

\subsubsection*{Exact state–space linearization (via the output)}
Suppose $(\Sigma,\yv)$ has vector relative degree $\rv$ at $\xv^\circ$ and that
\(
  |\rv|:=\sum_{i=1}^p r_i = n.
\)
On any open set where $\Am_{(\Sigma,\yv^{(\rv)})}(\xv)$ is nonsingular and $\rv$ is constant, the static state feedback
\[
  \uv \;=\; \Am_{(\Sigma,\yv^{(\rv)})}^{-1}(\xv)\,\big(\,-\bv(\xv)+\vv\,\big), 
  \qquad \vv\in\R^p,
\]
transforms \eqref{eq:y_r_now} into decoupled chains of integrators,
\(
 \yv^{(\rv)}=\vv,
\)
i.e., the system is (locally) put into Brunovský form with no zero dynamics~ \cite{Isidori1995}.

\subsubsection*{Validity set of an output}
Fix $\xv^\circ\in\mathcal{X}$. The \emph{validity set} of the output $\yv$ at $\xv^\circ$ is the largest set
\[
  \mathcal{B}(\xv^\circ)
  :=\Big\{\xv\in\mathcal{X}\;:\;
   \rank \Am_{(\Sigma,\yv^{(\rv)})}(\xv)=p\ \wedge\ 
   \rv(\xv)=\rv(\xv^\circ)\Big\},
\]
i.e., the set where the vector relative degree is well defined and constant and the decoupling matrix is nonsingular. On $\mathcal{B}(\xv^\circ)$, the feedback above yields exact state–space linearization. In the sequel, we will use the term \emph{flat output} for such an output $\yv$ (equivalently, an exact state‑space linearizing output; cf.~ \cite{Isidori1995}).
We write $\mathcal{B}(\xv^\circ;\Sigma,\yv)$ if the dependence on the pair $(\Sigma,\yv)$ needs to be emphasized; for prolonged systems we analogously use $\mathcal{B}_\ell(\xv^\circ;\Sigma^{(\ell)},\cdot)$.

\subsubsection*{Compatibility of outputs}
Let $\yv_1$ and $\yv_2$ be flat outputs for $\Sigma$ with validity sets $\mathcal{B}_1(\xv^\circ_1)$ and $\mathcal{B}_2(\xv^\circ_2)$. They are said to be \emph{compatible} if there exists a point $\xv^\ast$ and an open neighborhood $\mathcal{U}\ni\xv^\ast$ with
\(
  \mathcal{U}\subseteq \mathcal{B}_1(\xv^\circ_1)\cap \mathcal{B}_2(\xv^\circ_2).
\)
Intuitively, both linearizing feedbacks are well posed and the relative‑degree vectors are constant on the same open set, so switching between them is feasible.

\subsubsection*{A brief note on meld switching (used later)}
We will occasionally work with a finite \emph{deck} of candidate scalar output maps and with \emph{melds}, i.e., selections of $p$ scalar output maps from the deck that are flat outputs for the same system. The specific property we need—stated here without proof—is the following consequence of~ \cite{mizzoni2026switchingfeedbacklinearizingoutputsets}:  
(i) if two flat outputs have a nonempty intersection of their validity sets, and \emph{share} a subset of components, then under the same exact‑linearizing feedback the closed‑loop tracking error dynamics of the shared components are unchanged across a switch;  
(ii) switching among compatible outputs with a suitable dwell‑time plus a suitable margin conditions, keeps the closed‑loop state bounded.  
We will apply (i) in Sections~\ref{sec:common-prolongation}–\ref{sec:switching-negotiable-common-ell} to guarantee \emph{no transients} on shared outputs when negotiating between full and reduced tasks under a common prolongation; (ii) is only needed for completeness of the switching argument.

\section{Notation}\label{sec:notation}

\subsubsection*{Index sets and slicing}
Let $\mathcal{I}:=\{1,\ldots,p\}$ be the index set for input and output entries. Its power set is $\mathcal{P}(\mathcal{I})$. For any $\mathcal{S}\subset\mathcal{I}$, write $\overline{\mathcal{S}}:=\mathcal{I}\setminus\mathcal{S}$ for the complement.  
Given a set $\mathbb{X}$ and any array $\qv\in\mathbb{X}^p$, define the subvectors
$\qv_{\mathcal{S}}\in\mathbb{X}^{|\mathcal{S}|}$ and $\qv_{\overline{\mathcal{S}}}\in\mathbb{X}^{p-|\mathcal{S}|}$ by extracting the components of $\qv$ whose indices lie in $\mathcal{S}$ and $\overline{\mathcal{S}}$, respectively, preserving the natural order induced by $\mathcal{I}$.  
For $\qv',\qv''\in\mathbb{X}^p$ and $\mathcal{S}\subset\mathcal{I}$, define the \emph{merge} operator
$$
\operatorname{merge}_{\mathcal{S}}(\qv',\qv'') \in \mathbb{X}^p,
$$
as the array obtained by taking entries from $\qv'$ on the indices in $\mathcal{S}$ and from $\qv''$ on the indices in $\overline{\mathcal{S}}$.

\subsubsection*{Input removal and reduced systems}
For $\mathcal{A}\subseteq\mathcal{I}$, denote by $\Sigma_{\overline{\mathcal{A}}}$ the system obtained from $\Sigma$ by \emph{keeping only} the inputs with indices in $\overline{\mathcal{A}}$ (equivalently, by setting $u_i\equiv 0$ for $i\in\mathcal{A}$), i.e.,
\[
\Sigma_{\overline{\mathcal{A}}}:\quad
\dot{\xv}=\fv(\xv)+\sum_{i\in\overline{\mathcal{A}}}\gv_i(\xv)\,u_i.
\]

\subsubsection*{Prolongation (dynamic extension)}
For $\mathcal{A}\subseteq\mathcal{I}$, define the restricted nonnegative lattice
\[
\mathbb{N}_0^{p}\big|_{\overline{\mathcal{A}}}
:=\big\{\ell=(l_1,\ldots,l_p)\in\mathbb{N}_0^{p}\;:\;l_i=0\ \text{for all }i\in\mathcal{A}\big\}.
\]
An element $\ell\in\mathbb{N}_0^{p}\big|_{\overline{\mathcal{A}}}$ is a \emph{prolongation pattern} (dynamic‑extension pattern). Let
\[
|\ell|:=\sum_{i\in\overline{\mathcal{A}}} l_i,
\]
and define $\Sigma^{(\ell)}_{\overline{\mathcal{A}}}$ as the system obtained by interconnecting $\Sigma_{\overline{\mathcal{A}}}$ with $p-|\mathcal{A}|$ chains of integrators of lengths $l_i$ on the surviving inputs:
\begin{equation}
\Sigma^{(\ell)}_{\overline{\mathcal{A}}}:\quad
\dot{\xv}=\fv(\xv)+\sum_{i\in\overline{\mathcal{A}}}\gv_i(\xv)\,u_i,\qquad
u_i^{(l_i)}=v_i\ \ \forall\,i\in\overline{\mathcal{A}}.
\label{eq:sigma_ext_minus_A}
\end{equation}
Here $u_i^{(k)}$ denotes the $k$-th time derivative of $u_i$, and $v_i$ is the corresponding \emph{virtual input}. The state dimension of $\Sigma^{(\ell)}_{\overline{\mathcal{A}}}$ is $n+|\ell|$. This construction is the standard \emph{dynamic extension} (or \emph{prolongation}) used in feedback linearization; see, e.g.,~ \cite{Isidori1995,CharletLevineMarino1991,doi:10.1080/00207178508933432} and recent developments on pure prolongation conditions~ \cite{LevinePureProlongation2023}.%
\footnote{In the literature, dynamic extension is also called \emph{dynamic compensation} or \emph{linearization by prolongation}; see~ \cite{doi:10.1137/0329002} and references therein.}
For any pattern $\ell\in\mathbb{N}_0^p$ and index set $\mathcal{S}\subset\mathcal{I}$, let $\ell_{\mathcal{S}}\in\mathbb{N}_0^{|\mathcal{S}|}$ denote the subarray restricted to $\mathcal{S}$ and $|\ell_{\mathcal{S}}|:=\sum_{i\in\mathcal{S}} l_i$.

\subsubsection*{Full prolongation and stacked variables}
When $\mathcal{A}=\emptyset$ (i.e., $\overline{\mathcal{A}}=\mathcal{I}$), we write simply $\Sigma^{(\ell)}:=\Sigma^{(\ell)}_{\overline{\emptyset}}$. A convenient state for $\Sigma^{(\ell)}$ is
\[
\xv_{\ell}
:=\big[\,\xv^\top,\ \underbrace{u_1,\dot u_1,\ldots,u_1^{(l_1-1)}}_{\text{$l_1$ terms}},\ \ldots,\ 
\underbrace{u_p,\dot u_p,\ldots,u_p^{(l_p-1)}}_{\text{$l_p$ terms}}\big]^\top
\in \R^{n_\ell},
\]
with $n_\ell=n+|\ell|$, and the virtual input
\(
\vv := \big[u_1^{(l_1)}\ \cdots\ u_p^{(l_p)}\big]^\top.
\)

\subsubsection*{Derivative arrays}
Given a time‑dependent array $\sv(t)=[s_1(t)\ \cdots\ s_p(t)]^\top$ and a pattern $\ell\in\mathbb{N}_0^p$, define
\[
\sv^{(\ell)}
:=\big[s_1^{(l_1)}\ \cdots\ s_p^{(l_p)}\big]^\top,\qquad
\sv^{(\ell-1)}
:=\big[s_1^{(l_1-1)}\ \cdots\ s_p^{(l_p-1)}\big]^\top,
\]
with the convention that entries with $l_i=0$ are omitted in $\sv^{(\ell-1)}$ (and the feedback for that channel will have no error filter on the input stack.). This notation will be used for both input arrays and output arrays under prolongation.

\section{Dexterity Inputs and Their Characterization}\label{sec:io-class}

We propose an input classification framework relative to a given flat output. 
Consider the input–affine system $\Sigma$ in~\eqref{eq:sys} under the regularity assumptions of Section~\ref{subs:preliminaries}, with $p$ inputs and a $p$–dimensional output 
$\yv=[y_1\ \cdots\ y_p]^\top$. 
We study how the exact state–space linearizability (flatness) of $\yv$, or of \emph{subsets} of its entries, behaves when selected input channels are \emph{removed} (i.e., set identically to zero).

\subsubsection*{Dexterity subsets and loss}

\begin{defn}[Dexterity subset of inputs]\label{defn:dextsubset}
Let $\yv$ be a flat output for $\Sigma$ on a nonempty validity set $\mathcal{B}(\xv^\circ)$ (Section~\ref{subs:preliminaries}). 
A subset of input indices $\mathcal{A}\in\mathcal{P}(\mathcal{I})$ is a \emph{dexterity subset of inputs} for $(\Sigma,\yv)$ if there exist
\begin{enumerate}
  \item a prolongation pattern $\ell=(l_1,\ldots,l_p)\in \mathbb{N}_0^{p}\big|_{\overline{\mathcal{A}}}$, and
  \item an index set $\mathcal{O}\subset\mathcal{I}$ with $|\mathcal{O}|=|\mathcal{A}|$,
\end{enumerate}
such that the reduced output $\yv_{\overline{\mathcal{O}}}\in\R^{p-|\mathcal{A}|}$ 
is a flat output for the reduced prolonged system $\Sigma^{(\ell)}_{\overline{\mathcal{A}}}$ in the sense of Section~\ref{subs:preliminaries}, i.e., $|\rv|=n+|\ell_{\overline{\mathcal{A}}}|$ and $\Am_{(\Sigma^{(\ell)}_{\overline{\mathcal{A}}},\yv^{(\rv)}_{\overline{\mathcal{O}}})}$ is nonsingular on a nonempty validity set.
The integer $|\mathcal{A}|$ is the \emph{dexterity loss} associated with $\mathcal{A}$.
\end{defn}

Intuitively, removing the $|\mathcal{A}|$ input channels in $\mathcal{A}$ still allows exact input–output linearization of a \emph{reduced} set of $p-|\mathcal{A}|$ output components, possibly after a suitable dynamic extension on the surviving inputs~ \cite{Isidori1995,DescusseMoog1985}.

\begin{defn}[Family of dexterity subsets]\label{defn:Dsigma}
The \emph{family of dexterity subsets} for $(\Sigma,\yv)$ is
\[
\mathcal{D}_{(\Sigma,\yv)} \;:=\;
\Big\{\,\mathcal{A}\in\mathcal{P}(\mathcal{I})\;\big|\;\mathcal{A}\ \text{is a dexterity subset for }(\Sigma,\yv)\,\Big\}.
\]
For $i\in\mathcal{I}$, the subfamily containing $i$ is
\(
\mathcal{D}^i_{(\Sigma,\yv)} := \big\{\,\mathcal{A}\in\mathcal{D}_{(\Sigma,\yv)}:\ i\in\mathcal{A}\,\big\}.
\)
\end{defn}
The mapping $\yv\mapsto\mathcal{D}_{(\Sigma,\yv)}$ is output‑dependent; two flat outputs of the same system may yield different families.

\begin{defn}[Essential vs.\ dexterity inputs; minimum loss]\label{defn_dext_ess}
An input $u_i$ is \emph{essential} for $(\Sigma,\yv)$ if $\mathcal{D}^i_{(\Sigma,\yv)}=\emptyset$. 
Otherwise, $u_i$ is a \emph{dexterity input}. 
For a dexterity input $u_i$, the \emph{minimum dexterity loss} is
\[
\delta^i_{(\Sigma,\yv)} \;:=\; \min_{\mathcal{A}\in \mathcal{D}^i_{(\Sigma,\yv)}} |\mathcal{A}|
\;\in\; \{1,\ldots,p-1\}.
\]
\end{defn}

\subsubsection*{Task‑relative nature and relation to controllability}
\begin{rem}
Dexterity is \emph{relative} to the chosen flat output $\yv$. An input may be essential for $(\Sigma,\yv)$ yet belong to a dexterity subset for $(\Sigma,\overline{\yv})$ for a different flat output $\overline{\yv}$. 
Moreover, essentiality is distinct from (structural) controllability: an input being essential for $(\Sigma,\yv)$ does not, by itself, imply loss of system controllability when that channel is zeroed; it only says that \emph{exact linearization of the intended task} (full or reduced $\yv$) cannot be preserved under that removal~ \cite{Isidori1995}.
\end{rem}

\subsubsection*{Example: revisiting Section~\ref{sec:motv_exmp}}
For the square system $\Sigma$ and output $\yv=[x_1\ x_3\ x_4]^\top$ in Section~\ref{sec:motv_exmp}, one obtains
\[
\mathcal{D}_{(\Sigma,\yv)}=\big\{\{2\},\,\{1,2\}\big\}.
\]
Hence $u_1$ and $u_2$ are dexterity inputs with
\(
\mathcal{D}^1_{(\Sigma,\yv)}=\{\{1,2\}\},\ \delta^1_{(\Sigma,\yv)}=2
\)
and
\(
\mathcal{D}^2_{(\Sigma,\yv)}=\{\{2\},\{1,2\}\},\ \delta^2_{(\Sigma,\yv)}=1.
\)
Input $u_3$ is \emph{essential} since $\mathcal{D}^3_{(\Sigma,\yv)}=\emptyset$.

\subsubsection*{Non‑closure under unions (and a concrete counterexample)}
The family $\mathcal{D}_{(\Sigma,\yv)}$ need not be closed under set union. 
Even if two inputs $u_i$ and $u_j$ are \emph{individually} loss‑one dexterity inputs (i.e., $\{i\},\{j\}\in\mathcal{D}_{(\Sigma,\yv)}$), the pair $\{i,j\}$ may fail to be a dexterity subset.

\begin{exmp}\label{exmp:1-2-dext-12-no}
Consider
\begin{equation*}
\Sigma: \left\{
\begin{split}
\dot{x}_1&=x_2,\\
\dot{x}_2&=x_4+u_2,\\
\dot{x}_3&=x_4+u_1,\\
\dot{x}_4&=u_3,
\end{split}\right.
\end{equation*}
with $n=4$, $p=3$, and $\yv=[x_1\ x_3\ x_4]^\top$. 
Then $\yv$ is a flat output for $\Sigma$ (no zero dynamics). 
Moreover, $u_1$ and $u_2$ are each loss‑one dexterity inputs: $\mathcal{D}^1_{(\Sigma,\yv)}=\{\{1\}\}$ and $\mathcal{D}^2_{(\Sigma,\yv)}=\{\{2\}\}$. 
However, $\{1,2\}\notin\mathcal{D}_{(\Sigma,\yv)}$ because for every admissible choice $\mathcal{O}\in\{\{1,2\},\{1,3\},\{2,3\}\}$
the reduced output $\yv_{\overline{\mathcal{O}}}$ has $\sum_i r_i < n+|\widetilde{\ell}_{\overline{\mathcal{A}}}|$ (i.e., total relative degree smaller than the state dimension of $\Sigma^{(\widetilde{\ell})}_{\overline{\mathcal{A}}}$), hence cannot be flat. 
 Input $u_3$ is essential since its removal makes $x_4$ uncontrollable. 
Thus $\mathcal{D}_{(\Sigma,\yv)}=\{\{1\},\{2\}\}$. 
\end{exmp}

\subsubsection*{Realizing pairs $(\ell,\mathcal O)$ and multiplicity}
Since the prolongation pattern and the output index set that realize Definition~\ref{defn:dextsubset} need not be unique, we collect all admissible pairs:
\begin{defn}[Realizing pairs]
For $\mathcal{A}\in\mathcal{D}_{(\Sigma,\yv)}$, define
\[
\overline{\mathcal{C}}^{\mathcal{A}}_{(\Sigma,\yv)} 
:=\Big\{\, (\ell,\mathcal{O})\in \mathbb{N}_0^{p}\big|_{\overline{\mathcal{A}}}\times\mathcal{P}(\mathcal{I})\ \Big|\ 
\yv_{\overline{\mathcal{O}}}\ \text{is flat for}\ \Sigma^{(\ell)}_{\overline{\mathcal{A}}}\,\Big\}.
\]
\end{defn}

\begin{rem}
Multiplicity can occur in both arguments: for a fixed $\mathcal{O}$ there may exist finitely or infinitely many patterns $\ell$ yielding flatness, and for a fixed $\ell$ several index sets $\mathcal{O}$ may work (cf. dynamic extension/prolongation and pure‑prolongation characterizations)~ \cite{Isidori1995,CharletLevineMarino1991,LevinePureProlongation2025}. 
Concrete instances appear in our rigid‑body example (Section~\ref{sec:rigid-body}).
\end{rem}

\medskip
The definitions above require, for each $\mathcal{A}\subset\mathcal{I}$, analyzing the $\overline{\mathcal{A}}$–reduced system $\Sigma_{\overline{\mathcal{A}}}$ (possibly after prolongation). 
In the next section we derive an \emph{equivalent} condition phrased \emph{solely} on the original system $\Sigma$, providing both analytic simplification and structural insight.

\section{Flat-Input Complement and Equivalence with Dexterity}\label{sec:FIC-equivalence}

We now introduce the \emph{flat-input complement} viewpoint and show that, under a mild compatibility condition on the validity set, it is \emph{equivalent} to dexterity.

\begin{defn}[Flat-input complement subset]\label{defn:FIC}
Let $\Sigma$ be as in~\eqref{eq:sys}, and let $\yv\in\R^p$ be a flat output for $\Sigma$. A subset of input indices $\mathcal{A}\subset\mathcal{I}$ is a \emph{flat-input complement subset} for $(\Sigma,\yv)$ if there exist
\begin{enumerate}
  \item a prolongation pattern $\ell\in\mathbb{N}_0^p$, and
  \item an index set $\mathcal{O}\subset\mathcal{I}$ with $|\mathcal{O}|=|\mathcal{A}|$,
\end{enumerate}
such that the augmented output
\[
\yv_{\overline{\mathcal{O}},\mathcal{A}}
:=\begin{bmatrix}\yv_{\overline{\mathcal{O}}}\\ \uv_{\mathcal{A}}\end{bmatrix}\in\R^p
\]
is a flat output for the prolonged system $\Sigma^{(\ell)}$ (Section~\ref{subs:preliminaries}). 
\end{defn}

\begin{defn}[Family of flat-input complement subsets]\label{defn:Ffamily}
The family of flat-input complement subsets for $(\Sigma,\yv)$ is
\(
\mathcal{F}_{(\Sigma,\yv)}
:=\big\{\mathcal{A}\subset\mathcal{I}\ \big|\ \mathcal{A}\ \text{is a flat-input complement subset for }(\Sigma,\yv)\big\}.
\)
\end{defn}

\begin{defn}[Realizing pairs for flat-input complement]\label{defn:CA}
Given $\mathcal{A}\in\mathcal{F}_{(\Sigma,\yv)}$, define
\[
\mathcal{C}^{\mathcal{A}}_{(\Sigma,\yv)}
:=\Big\{\,(\ell,\mathcal{O})\in\mathbb{N}_0^p\times\mathcal{P}(\mathcal{I})\;\Big|\;
\yv_{\overline{\mathcal{O}},\mathcal{A}}
\text{ is flat for }\Sigma^{(\ell)}\Big\}.
\]
Among these, let $\mathcal{C}^{\mathcal{A},\mathbf{0}}_{(\Sigma,\yv)}\subset\mathcal{C}^{\mathcal{A}}_{(\Sigma,\yv)}$ be the subset whose validity \ set $\mathcal{B}_{\ell}(\xv^\circ;\Sigma^{(\ell)},\yv_{\overline{\mathcal{O}},\mathcal{A}})$ in the state of $\Sigma^{(\ell)}$ intersects the \emph{zero surface} of the removed channels and their derivatives:
\begin{equation}
\mathcal{B}_{\ell} \cap \big\{\xv_{\ell}:\ \uv_{\mathcal{A}}=\dot{\uv}_{\mathcal{A}}=\cdots=\uv_{\mathcal{A}}^{(\ell_{\mathcal{A}}-1)}=\mathbf 0\big\}\neq\emptyset.
\label{eq:validity_set_thm}
\end{equation}
\end{defn}

\begin{defn}[Zero-compatible flat-input complement family]\label{defn:F0}
We define
\[
\mathcal{F}^{\mathbf 0}_{(\Sigma,\yv)}
:=\Big\{\mathcal{A}\in\mathcal{F}_{(\Sigma,\yv)}\ \big|\ \mathcal{C}^{\mathcal{A},\mathbf 0}_{(\Sigma,\yv)}\neq\emptyset\Big\}.
\]
\end{defn}

\begin{thm}[Equivalence: dexterity $\Longleftrightarrow$ flat-input complement]\label{thm:dexterity-input}
Let $\Sigma$ be as in~\eqref{eq:sys} and let $\yv\in\R^p$ be a flat output. For any $\mathcal{A}\subset\mathcal{I}$,
\[
\mathcal{A}\in\mathcal{D}_{(\Sigma,\yv)}
\quad\Longleftrightarrow\quad
\mathcal{A}\in\mathcal{F}^{\mathbf 0}_{(\Sigma,\yv)}.
\]
\end{thm}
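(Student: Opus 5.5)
The plan is to work entirely inside the fully prolonged system $\Sigma^{(\ell)}$ and to view the reduced prolonged system $\Sigma^{(\ell_{\overline{\mathcal{A}}})}_{\overline{\mathcal{A}}}$ as its restriction to the \emph{zero surface}
\[
\mathcal{Z}_{\mathcal{A}}:=\big\{\xv_\ell:\ \uv_{\mathcal{A}}=\dot\uv_{\mathcal{A}}=\cdots=\uv_{\mathcal{A}}^{(\ell_{\mathcal{A}}-1)}=\mathbf 0\big\}.
\]
Under $\vv_{\mathcal{A}}\equiv\mathbf 0$ the surface $\mathcal{Z}_{\mathcal{A}}$ is invariant. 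On it the drift and input fields of $\Sigma^{(\ell)}$ restrict exactly to those of $\Sigma^{(\ell_{\overline{\mathcal{A}}})}_{\overline{\mathcal{A}}}$, with the same virtual inputs $\vv_{\overline{\mathcal{A}}}$. Since $\yv$ depends only on $\xv$, every iterated Lie derivative of $y_i$, $i\in\overline{\mathcal{O}}$, computed in $\Sigma^{(\ell)}$ and then evaluated on $\mathcal{Z}_{\mathcal{A}}$, coincides with the one computed in the reduced system.

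The first key step is the structure of the decoupling matrix of the augmented output $\yv_{\overline{\mathcal{O}},\mathcal{A}}$ for $\Sigma^{(\ell)}$. Each $u_i$ with $i\in\mathcal{A}$ has relative degree $l_i$, with $u_i^{(l_i)}=v_i$. Ordering the columns as $(\vv_{\overline{\mathcal{A}}},\vv_{\mathcal{A}})$, the matrix is therefore block upper triangular:
\[
\Am=\begin{bmatrix}\Am_{11}&\Am_{12}\\ \mathbf 0&\Jm\end{bmatrix},
\]
where $\Jm$ is the identity up to a permutation. Hence $\Am$ is nonsingular if and only if $\Am_{11}$ is. In particular, no row $i\in\overline{\mathcal{O}}$ can have $\Am_{11}$-part identically zero, i.e.\ $y_i$ cannot "see" $\vv_{\mathcal{A}}$ strictly before $\vv_{\overline{\mathcal{A}}}$. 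Consequently, wherever the relative degrees of $\yv_{\overline{\mathcal{O}}}$ in the two systems agree, $\Am_{11}$ restricted to $\mathcal{Z}_{\mathcal{A}}$ is exactly the decoupling matrix $\Am_{(\Sigma^{(\ell)}_{\overline{\mathcal{A}}},\yv^{(\rv)}_{\overline{\mathcal{O}}})}$. The dimension count also matches: $|\rv_{\overline{\mathcal{O}}}|+|\ell_{\mathcal{A}}|=n+|\ell|$ if and only if $|\rv_{\overline{\mathcal{O}}}|=n+|\ell_{\overline{\mathcal{A}}}|$.

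\textbf{($\Leftarrow$)} Take $(\ell,\mathcal{O})\in\mathcal{C}^{\mathcal{A},\mathbf 0}_{(\Sigma,\yv)}$ and a point of $\mathcal{B}_\ell\cap\mathcal{Z}_{\mathcal{A}}$. Since $\mathcal{B}_\ell$ is open in the prolonged state, its intersection with $\mathcal{Z}_{\mathcal{A}}$ is relatively open and nonempty. On this set the relative degrees are constant. Vanishing of the lower-order coefficients on an open set of $\mathcal{Z}_{\mathcal{A}}$ gives vanishing in the reduced system, and $\Am_{11}$ is invertible there. Then $\ell_{\overline{\mathcal{A}}}$ (padded with zeros on $\mathcal{A}$) and the same $\mathcal{O}$ witness Definition~\ref{defn:dextsubset}.

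\textbf{($\Rightarrow$)} Start from a realizing pair $(\ell_{\overline{\mathcal{A}}},\mathcal{O})$ for the reduced system and extend the pattern by choosing $\ell_{\mathcal{A}}$. The natural first attempt is the minimal choice $l_i=1$ for $i\in\mathcal{A}$; if that fails, take $l_i\ge\max_{j\in\overline{\mathcal{O}}} r_j$, so that $\vv_{\mathcal{A}}$ cannot enter any $y_j^{(k)}$ with $k\le r_j$. Then show that $(\yv_{\overline{\mathcal{O}}},\uv_{\mathcal{A}})$ is flat for $\Sigma^{(\ell)}$ on a validity set meeting $\mathcal{Z}_{\mathcal{A}}$.

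The main obstacle is this converse direction, specifically the \emph{constancy of the relative degree off the zero surface}. A coefficient $L_{\gv_j}L^k_{\fv_\ell}h_i$ with $j\in\overline{\mathcal{A}}$ and $k<r_i-1$ vanishes on an open subset of $\mathcal{Z}_{\mathcal{A}}$. It may nevertheless be a nonzero function of the stacked states $\uv_{\mathcal{A}},\dots,\uv_{\mathcal{A}}^{(\ell_{\mathcal{A}}-1)}$, which would lower $r_i$ in a full neighborhood.

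To handle this, I would exploit input-affinity. These coefficients are polynomial in the $\uv_{\mathcal{A}}$-stack, with the $\uv_{\mathcal{A}}$-free part equal to the reduced-system coefficient. I would then argue as follows. If a lower-order coefficient survives, the resulting $\Am_{11}$-row obstructs the full-rank and $|\rv|=n+|\ell|$ conditions only in a way that can be absorbed by the triangular structure. Otherwise, I would restrict the validity set to a neighborhood of a point of $\mathcal{Z}_{\mathcal{A}}$ at which the reduced-system vanishing persists, i.e.\ use relative degrees defined at the base point $\xv^\circ\in\mathcal{Z}_{\mathcal{A}}$. Making this precise, possibly by reading the validity-set convention at zero-surface points, is where I expect the real work of the proof to lie.
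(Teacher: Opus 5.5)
Your $(\Leftarrow)$ direction is correct and follows the paper's (ii$\Rightarrow$i) argument: the block-triangular decoupling matrix with an identity block for $\uv_{\mathcal A}$, restriction of the upper-left block to the zero surface, and the dimension count. It is done more carefully than in the paper, which leaves implicit both the tangency/restriction argument and the replacement of $\ell$ by $\ell_{\overline{\mathcal A}}$ padded with zeros (required by the dexterity definition). The genuine gap is $(\Rightarrow)$, which you do not prove. You choose $\ell_{\mathcal A}$ so that $\vv_{\mathcal A}$ does not enter too early; the paper's analogous choice is $l_{a_j}=\max_i(\overline\rho_i-c_i^j)$. You then raise the real obstacle: coefficients of $\vv_{\overline{\mathcal A}}$ at orders below $\overline\rho_i$ that vanish on $\mathcal Z_{\mathcal A}$ but are nonzero functions of the $\uv_{\mathcal A}$-stack. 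You leave it unresolved, and neither remedy works. Such a coefficient cannot be ``absorbed by the triangular structure'': it lowers $r_i$ on an open set, where then $|\rv|<n+|\ell|$, and the $\uv_{\mathcal A}$-rows cannot compensate. Nor need there be any point of $\mathcal Z_{\mathcal A}$ around which the reduced-system vanishing persists.

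Concretely, take $\dot x_1=x_2+x_3u_1$, $\dot x_2=x_3$, $\dot x_3=u_2$, $\yv=(x_1,x_2)$. Then $\yv$ is flat on $\{x_3\neq0\}$, with $\rv=(1,2)$ and decoupling matrix $\mathrm{diag}(x_3,1)$. Removing $u_1$ leaves a triple integrator with flat output $x_1$, so $\{1\}\in\mathcal D_{(\Sigma,\yv)}$. However:
\begin{itemize}
\item In $\Sigma^{(\ell)}$ with $l_1\ge1$, $y_1^{(2+l_2)}$ contains the term $u_1v_2$, coming from $(x_3u_1)^{(1+l_2)}$. So the coefficient of $v_2$ at order $2+l_2<3+l_2$ is the coordinate $u_1$, which vanishes on $\mathcal Z_{\mathcal A}$ but on no open set.
\item With $l_1=0$, the term $x_3v_1$ already appears in $\dot x_1$.
\item The other candidate $(x_2,u_1)$ has $|\rv|=n+|\ell|-1$ for every $\ell$.
\end{itemize}
Hence, with the neighbourhood-based relative degree of the Preliminaries, $\{1\}\notin\mathcal F^{\mathbf 0}_{(\Sigma,\yv)}$.

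The paper's own proof passes over exactly this point. It asserts relative degree $(\overline{\rhov},\ell_{\mathcal A})$, checks nonsingularity on the zero surface, and concludes ``by continuity'' that a validity set meets it. Continuity controls the determinant, not the constancy of $\rv$. In the example, its choice $\ell=(2,0)$ gives $\rv=(3,2)$ and determinant $-1$ on $\mathcal Z_{\mathcal A}$, while $r_1=2$ wherever $u_1\neq0$. To complete $(\Rightarrow)$ you must do one of two things. Either adopt explicitly the pointwise-on-$\mathcal Z_{\mathcal A}$ reading of relative degree and validity set that the paper's argument effectively uses. Or add the hypothesis that the coefficients of $v_j$, $j\in\overline{\mathcal A}$, in $y_i^{(k)}$, $k<\overline\rho_i$, vanish identically near $\mathcal Z_{\mathcal A}$ and not merely on it. Under either, your outline goes through.
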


\begin{proof}
\emph{(ii $\Rightarrow$ i)} Assume $\mathcal{A}\in\mathcal{F}^{\mathbf 0}_{(\Sigma,\yv)}$. Then there exists $(\ell,\mathcal{O})\in\mathcal{C}^{\mathcal{A},\mathbf 0}_{(\Sigma,\yv)}$ such that $\yv_{\overline{\mathcal{O}},\mathcal{A}}$ is flat for $\Sigma^{(\ell)}$, with vector relative degree $\rv=\big[\rhov^\top\ \ell_{\mathcal{A}}^\top\big]^\top$ satisfying $|\rv|=n_{\ell}$, and with validity set $\mathcal{B}_{\ell}$ meeting the zero surface~\eqref{eq:validity_set_thm}. Ordering inputs so that $\uv_{\mathcal{A}}^{(\ell_{\mathcal{A}})}$ is last, the decoupling matrix has the block‑triangular form
\[
\Am_{(\Sigma^{(\ell)},\,\yv^{(\rv)}_{\overline{\mathcal{O}},\mathcal{A}})}(\xv_{\ell})
=\begin{bmatrix}
\Am(\xv_{\ell}) & \Bm(\xv_{\ell})\\[2pt]
\mathbf 0 & \mathbf I_{|\mathcal{A}|}
\end{bmatrix}.
\]
Consider now $\Sigma^{(\ell)}_{\overline{\mathcal{A}}}$ with state dimension $n_{\ell,\overline{\mathcal{A}}}=n_{\ell}-|\ell_{\mathcal{A}}|$. The $\rhov$‑th derivatives of $\yv_{\overline{\mathcal{O}}}$ yield the decoupling matrix
\(
\Am_{(\Sigma^{(\ell)}_{\overline{\mathcal{A}}},\,\yv^{(\rhov)}_{\overline{\mathcal{O}}})}(\overline{\xv})
=\Am(\xv_{\ell})\big|_{\uv_{\mathcal{A}}=\cdots=\uv_{\mathcal{A}}^{(\ell_{\mathcal{A}}-1)}=\mathbf 0}.
\)
By~\eqref{eq:validity_set_thm}, $\Am(\cdot)$ is nonsingular on that surface; moreover $|\rhov|=n_{\ell}-|\ell_{\mathcal{A}}|=n_{\ell,\overline{\mathcal{A}}}$. Hence $\yv_{\overline{\mathcal{O}}}$ is flat for $\Sigma^{(\ell)}_{\overline{\mathcal{A}}}$, so $\mathcal{A}\in\mathcal{D}_{(\Sigma,\yv)}$ by Definition~\ref{defn:dextsubset}.

\smallskip
\emph{(i $\Rightarrow$ ii)} Assume $\mathcal{A}\in\mathcal{D}_{(\Sigma,\yv)}$ and pick $(\widetilde{\ell},\mathcal{O})\in\overline{\mathcal{C}}^{\mathcal{A}}_{(\Sigma,\yv)}$ (Definition~\ref{defn:dextsubset}). Then $\yv_{\overline{\mathcal{O}}}$ is flat for $\Sigma^{(\widetilde{\ell})}_{\overline{\mathcal{A}}}$ with vector relative degree $\overline{\rhov}$ and nonsingular $\Am_{(\Sigma^{(\widetilde{\ell})}_{\overline{\mathcal{A}}},\,\yv^{(\overline{\rhov})}_{\overline{\mathcal{O}}})}(\overline{\xv})$. 
In the \emph{full} system, the derivatives $y_i^{(k)}$ ($i\in\overline{\mathcal{O}}$, $1\le k\le\overline{\rho}_i$) may depend on $u_j\in\mathcal{A}$; 
Let $c_i^j$ be the smallest order at which $u_j\in\mathcal{A}$ appears in some $y_i^{(k)}$, $1\le k\le\overline{\rho}_i$; if it never appears, set $c_i^j:=\overline{\rho}_i$. Define $s_i^j:=\overline{\rho}_i-c_i^j\ge 0$ and choose $l_{a_j}:=\max_{i\in\overline{\mathcal{O}}} s_i^j$.

Form $\ell:=\operatorname{merge}(\widetilde{\ell}_{\overline{\mathcal{A}}},\ell_{\mathcal{A}})$. 
Then $|\rv|=|\overline{\rhov}|+|\ell_{\mathcal{A}}|=n+|\widetilde{\ell}_{\overline{\mathcal{A}}}|+|\ell_{\mathcal{A}}|=n+|\ell|=n_\ell$, and the decoupling matrix
\[
\Am_{(\Sigma^{(\ell)},\,\yv^{(\rv)}_{\overline{\mathcal{O}},\mathcal{A}})}(\xv_{\ell})
=\begin{bmatrix}
\overline{\Am}(\xv_{\ell}) & \overline{\Bm}(\xv_{\ell})\\[2pt]
\mathbf 0 & \mathbf I
\end{bmatrix},
\]
is nonsingular on the zero surface since $\overline{\Am}(\xv_{\ell})\big|_{\uv_{\mathcal{A}}=\cdots=\uv_{\mathcal{A}}^{(\ell_{\mathcal{A}}-1)}=\mathbf 0}
=\Am_{(\Sigma^{(\widetilde{\ell})}_{\overline{\mathcal{A}}},\,\yv^{(\overline{\rhov})}_{\overline{\mathcal{O}}})}(\overline{\xv})$ is nonsingular. 
By continuity there exists a validity set $\mathcal{B}_{\ell}$ intersecting the zero surface, so $(\ell,\mathcal{O})\in\mathcal{C}^{\mathcal{A},\mathbf 0}_{(\Sigma,\yv)}$ and $\mathcal{A}\in\mathcal{F}^{\mathbf 0}_{(\Sigma,\yv)}$.
\end{proof}

\begin{rem}
(i) The proof does not impose a priori lower bounds on the prolongation entries $l_j$, and allows direct feedthrough (``relative degree zero'') of some $u_j$ into components of $\yv_{\overline{\mathcal{O}}}^{(\rhov)}$; the block‑triangular structure still yields nonsingularity on the zero surface.   When $l_j=0$ for some channel, the corresponding input component does not appear in the output stack and no input‑stack error filter is applied to it (cf. Notation).
(ii) In practice, the additional prolongations $\ell_{\mathcal A}$ can be chosen minimally by inspecting the first order where each $u_j$ enters the relevant derivatives of $y_i$—consistent with standard dynamic‑extension constructions and recent pure‑prolongation criteria~ \cite{Isidori1995,LevinePureProlongation2025}.
\end{rem}

\section{Dexterity Input–Output Negotiation Control}
\label{sec:negotiation-control}

\subsection{Why Transients Arise Under Direct Shutdown}
\label{subsec:transient-direct-shutdown}

Consider a system $\Sigma$ with a feedback–linearizing output $\yv$ and a dexterity subset of inputs indexed by $\mathcal{A}$ (Definition~\ref{defn:dextsubset}). 
Assume a feedback–linearizing controller $\mathcal{K}_\Sigma$ achieves exponential tracking of $\yv$.
Suppose that, at some time $t_s$, the inputs in $\mathcal{A}$ are intentionally deactivated (e.g., energy saving, cooling, actuator wear), and we \emph{switch} to a controller $\mathcal{K}_{\Sigma_{\overline{\mathcal{A}}}}$ designed for the reduced prolonged system $\Sigma^{(\ell)}_{\overline{\mathcal{A}}}$ using a realizing pair $(\ell,\mathcal{O})\in\overline{\mathcal{C}}^{\mathcal{A}}_{(\Sigma,\yv)}$.

Although this switch preserves exact linearizability of the \emph{reduced} output $\yv_{\overline{\mathcal{O}}}$, the closed–loop behavior of the \emph{kept} components need not persist \emph{without a transient}.
In particular, shutting inputs \emph{instantaneously} to zero may \emph{change} the vector relative degree of some kept outputs, introducing coupling with other states and yielding a non‑negligible settling transient before exponential convergence is re‑established.

\begin{exmp}[Transient after deactivating a single dexterity input]
\label{exmp:transient_u2_zero}
Reconsider the motivating example of Section~\ref{sec:motv_exmp} with $\yv=[x_1\ x_3\ x_4]^\top$. Since $\{2\}\in\mathcal{D}_{(\Sigma,\yv)}$, the input $u_2$ is a dexterity input and may be set to zero, while regulating $\yv_{\overline{\{3\}}}=[x_1\ x_3]^\top$ via $\Sigma_{\overline{\{2\}}}^{(\ell)}$. 
Figure~\ref{fig:sim_motv_exmpuu2_Zeros_transient} compares $\mathcal{K}_\Sigma$ (pre‑switch) and $\mathcal{K}_{\Sigma_{\overline{\{2\}}}}$ (post‑switch) for the reference $\yv^d=[x_1^d\ x_3^d\ x_4^d]^\top=[4\ 4\ -20]^\top$. 
At $t<8\si{s}$, the tracking error is exponentially stable. 
At $t\ge 8\si{s}$ we set $u_2\equiv 0$ and switch to $\mathcal{K}_{\Sigma_{\overline{\{2\}}}}$. 
While $x_1$ keeps converging smoothly to $x_1^d$, $x_3$ exhibits a transient immediately after the switch, because the relative degree of $y_2=x_3$ increases from $1$ to $2$ when $u_2$ is removed, coupling $x_3$ with $x_4$. 
For special references (e.g., $x_4^d=0$), the coupling may vanish and the transient can be negligible.
\end{exmp}

\begin{figure}[t]
    \centering
    \includegraphics[width=1.03\linewidth]{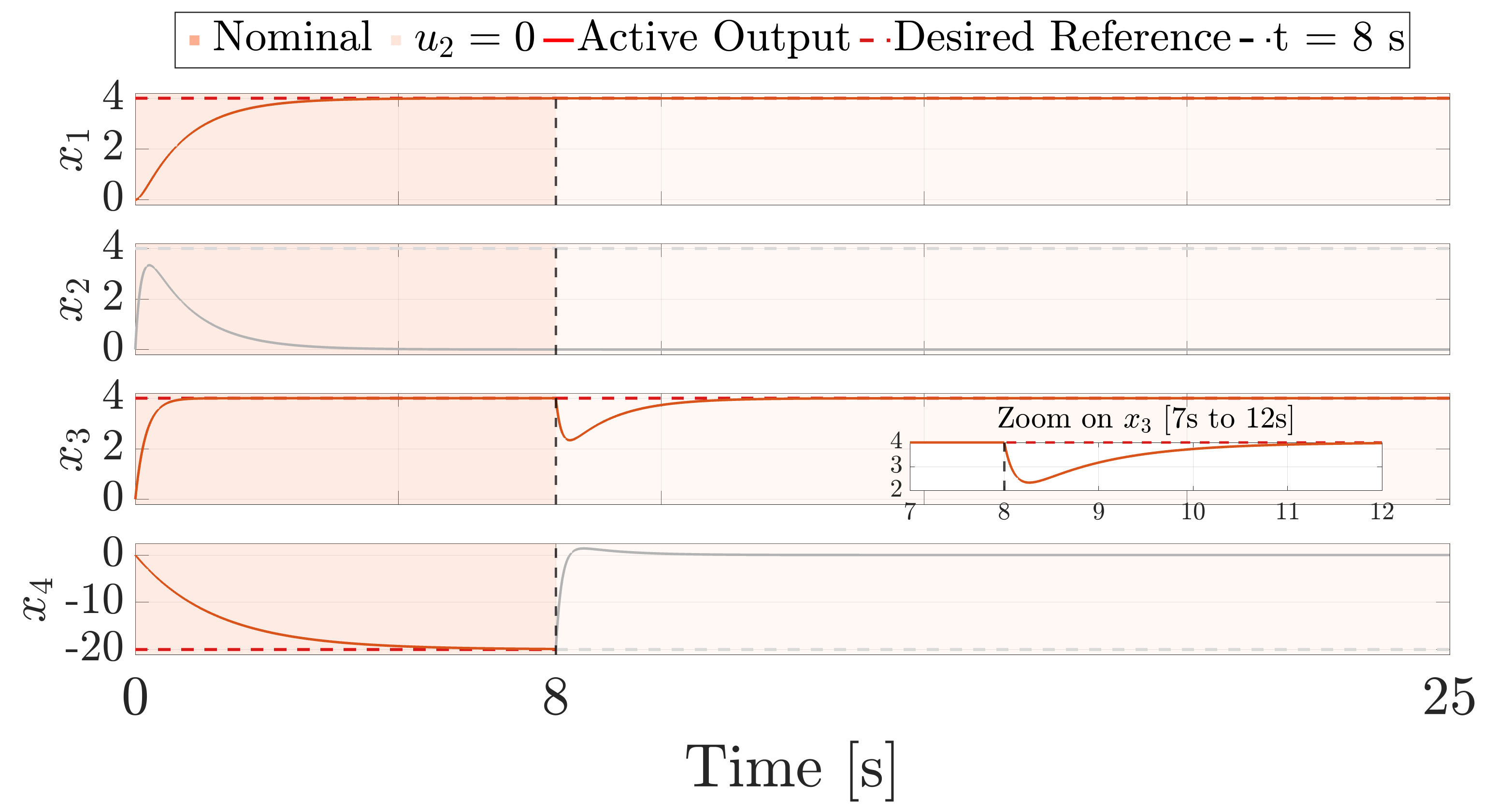}
    \caption{(Motivating example, Section~\ref{sec:motv_exmp}). 
    Direct shutdown: $\mathcal{K}_\Sigma$ regulates $\yv=[x_1\ x_3\ x_4]^\top$ until $t=8\si{s}$; then $u_2$ is set to zero and $\mathcal{K}_{\Sigma_{\overline{\{2\}}}}$ regulates $\yv_{\overline{\{3\}}}=[x_1\ x_3]^\top$. 
    A transient appears on $x_3$.}
    \label{fig:sim_motv_exmpuu2_Zeros_transient}
\end{figure}

\subsection{Eliminating the Transient via a Unified Prolonged Controller}
\label{subsec:no-transient-prolonged}

By Theorem~\ref{thm:dexterity-input} (dexterity $\Leftrightarrow$ flat‑input complement with zero‑surface compatibility), there exists a prolonged system $\Sigma^{(\ell)}$ and an index set $\mathcal{O}$ such that 
$\yv_{\overline{\mathcal{O}},\mathcal{A}}=[\yv_{\overline{\mathcal{O}}}^\top\ \ \uv_{\mathcal{A}}^\top]^\top$ is flat for $\Sigma^{(\ell)}$.
The \emph{additional} requirement is that $\yv$ itself be flat for the very same prolonged system $\Sigma^{(\ell)}$; only under this common prolongation can a single controller be used seamlessly across the switch.
In this setting the channels in $\mathcal{A}$ are treated as \emph{(prolonged) states} that can be \emph{driven to zero smoothly} by assigning virtual inputs on their integrator chains; switching between the two flat outputs $\yv$ and $\yv_{\overline{\mathcal{O}},\mathcal{A}}$ preserves the closed‑loop dynamics of the shared components (meld switching result), hence \emph{no transient} appears on the kept outputs~ \cite{mizzoni2026switchingfeedbacklinearizingoutputsets}.

\begin{exmp}[No transient with a unified controller]
\label{exmp:no_transient}
Consider again the shutdown of $u_2$ in Example~\ref{exmp:transient_u2_zero}. 
Choose $\ell=\{0,1,0\}$ and use the \emph{same} linearizing controller $\mathcal{K}_{\Sigma^{(\ell)}}$ both before and after the shutdown.
At $t<8\si{s}$, regulate $\yv$; at $t\ge 8\si{s}$, regulate $\yv_{\overline{\{3\}},\{2\}}=[x_1\ x_3\ u_2]^\top$ and set the virtual command ${v}_2=-k^0_{u,2}u_2$ with $k^0_{u,2}>0$, yielding an exponential decay of $u_2$ with time constant $\tau=(k^0_{u,2})^{-1}$. 
Figure~\ref{fig:sim_motv_exmpuu2_Zeros} shows that, in contrast to Example~\ref{exmp:transient_u2_zero}, \emph{no transient} appears in the kept output $y_2=x_3$ at the switch.
\end{exmp}

\begin{figure}[t]
    \centering
    \includegraphics[width=1.03\linewidth]{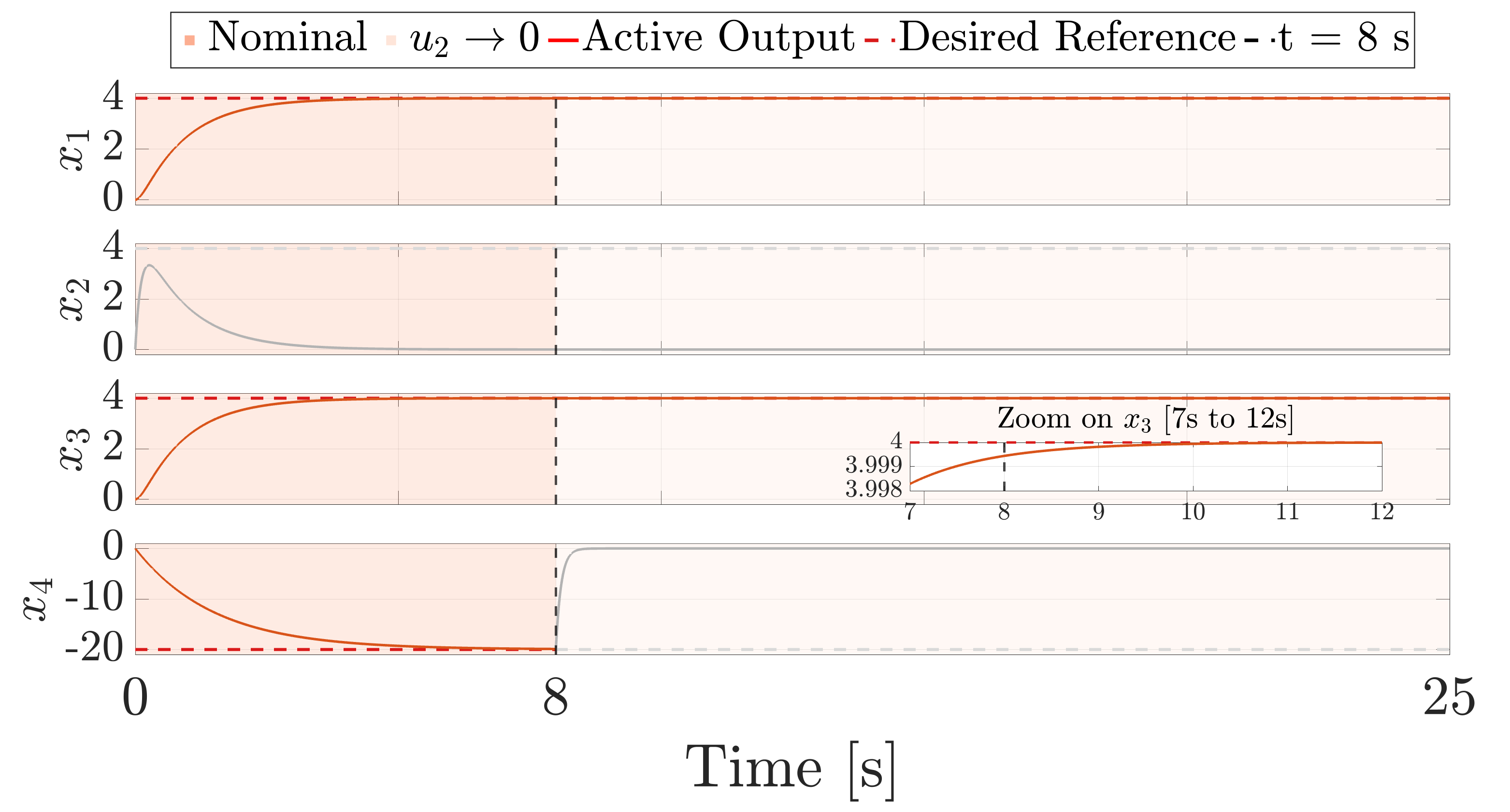}
    \caption{Unified prolonged control: one controller $\mathcal{K}_{\Sigma^{(\ell)}}$ regulates pre‑switch $\yv$ and post‑switch $\yv_{\overline{\{3\}},\{2\}}=[x_1\ x_3\ u_2]^\top$, and smoothly drives $u_2\to 0$ (no transient on the kept outputs).}
    \label{fig:sim_motv_exmpuu2_Zeros}
\end{figure}

\subsection{When a Unified Controller is Not Available}
\label{subsec:limit-unified}

The unified approach requires that the \emph{same} prolonged system $\Sigma^{(\ell)}$ admits \emph{both} $\yv$ and $\yv_{\overline{\mathcal{O}},\mathcal{A}}$ as flat outputs. 
This may fail: there exist dexterity subsets for which Theorem~\ref{thm:dexterity-input} ensures the existence of $(\ell,\mathcal{O})$ making $\yv_{\overline{\mathcal{O}},\mathcal{A}}$ flat, but the same $\Sigma^{(\ell)}$ does \emph{not} render $\yv$ flat.

\begin{exmp}[Unified controller not applicable]
\label{exmp:transient}
In the motivating example (Section~\ref{sec:motv_exmp}), since $\{1,2\}\in\mathcal{D}_{(\Sigma,\yv)}$, both $u_1$ and $u_2$ can be deactivated, leaving $\yv_{\overline{\{2,3\}}}=x_1$ as the controlled output via $\Sigma^{(\ell)}_{\overline{\{1,2\}}}$. 
Switching from $\mathcal{K}_\Sigma$ to $\mathcal{K}_{\Sigma_{\overline{\{1,2\}}}}$ (Fig.~\ref{fig:sim_motv_exmpu1u2_Zeros}) produces a transient on $x_1$ due to a jump in relative degree, despite $x_1$ being part of $\yv$. 
Theorem~\ref{thm:dexterity-input} guarantees that there exists $\ell$ (e.g., $\{2,2,0\}$) such that $\yv_{\overline{\mathcal{O}},\{1,2\}}$ is flat for $\Sigma^{(\ell)}$, but in this case $\yv$ is \emph{not} flat for the same $\Sigma^{(\ell)}$ (e.g., $\yv_{\overline{\{2,3\}}}$ has relative degree $4<n_\ell=8$). 
Therefore a unified controller cannot be used here, and transients may be unavoidable unless a different choice of task or prolongation is made.
\end{exmp}

\begin{figure}[t]
    \centering
    \includegraphics[width=1.03\linewidth]{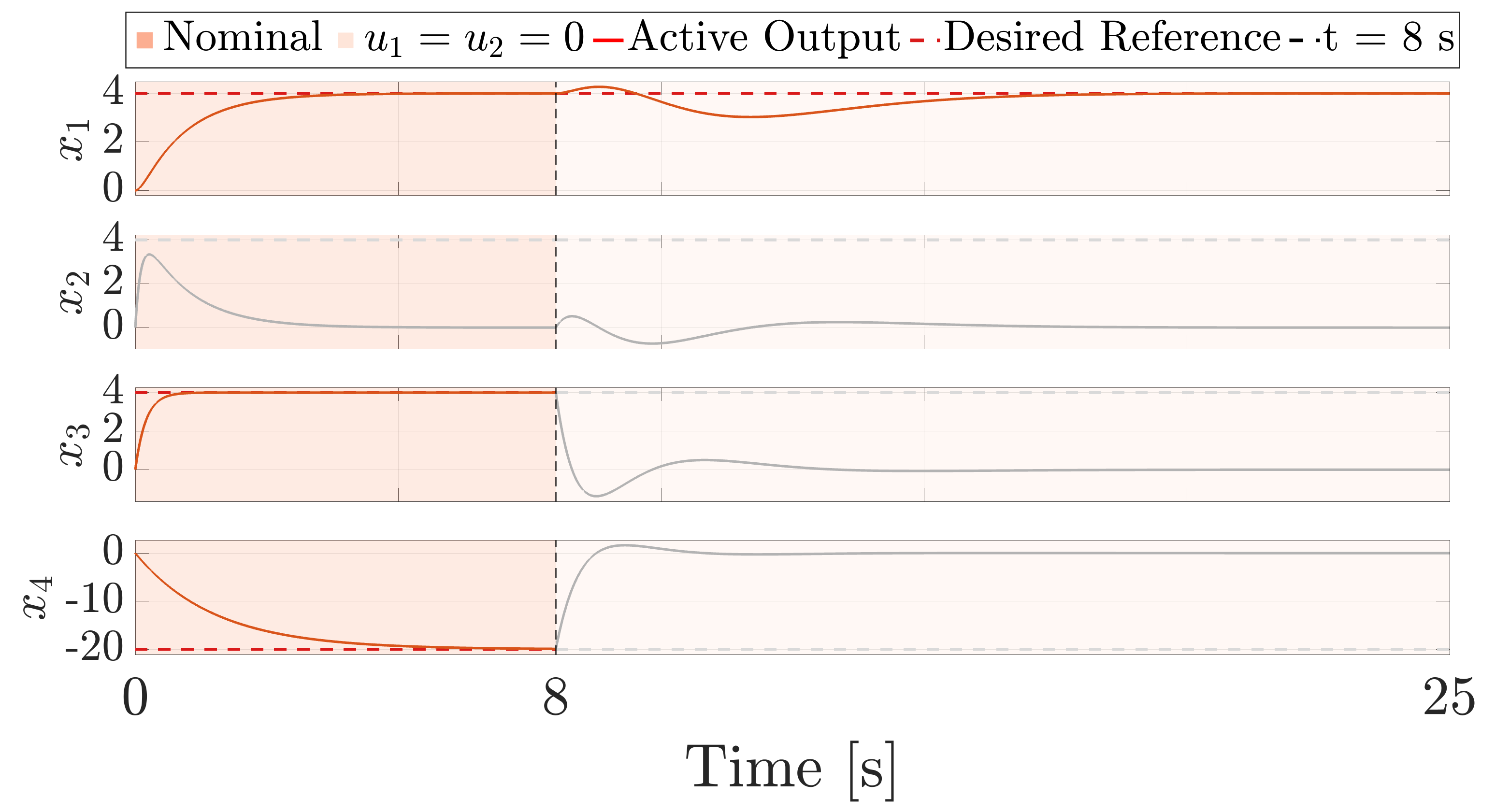}
    \caption{Direct shutdown of $u_1$ and $u_2$ at $t=8\si{s}$, then regulation of $\yv_{\overline{\{2,3\}}}=x_1$ with $\mathcal{K}_{\Sigma_{\overline{\{1,2\}}}}$. 
    A transient appears in $x_1$ due to a relative‑degree jump.}
    \label{fig:sim_motv_exmpu1u2_Zeros}
\end{figure}

\medskip
The next subsections formalize (i) the conditions under which a \emph{unified} prolonged controller exists and guarantees zero‑transient switching among compatible outputs, and (ii) a constructive design (gains, dwell‑time, and switching logic) for negotiation among full and reduced tasks.

\section{Input–Output Subsets with a Common Prolongation}
\label{sec:common-prolongation}

In~ \cite{mizzoni2026switchingfeedbacklinearizingoutputsets} we introduced \emph{melds}, i.e., square selections of scalar outputs from a deck that yield a flat output for a given system.  We showed that if
the corresponding validity sets intersect on a nonempty compact set,  a feedback–linearizing controller can switch among them while preserving the exponentially stable error dynamics of the \emph{shared} components. 
In the present framework, three systems naturally arise: the original system $\Sigma$ with flat output $\yv$; the reduced prolonged system $\Sigma^{(\ell)}_{\overline{\mathcal{A}}}$ (with $(\ell,\mathcal{O})\in\overline{\mathcal{C}}^{\mathcal{A}}_{(\Sigma,\yv)}$) in which $\yv_{\overline{\mathcal{O}}}$ is flat; and the prolonged original system $\Sigma^{(\ell)}$ (with $(\ell,\mathcal{O})\in\mathcal{C}^{\mathcal{A}}_{(\Sigma,\yv)}$) in which $\yv_{\overline{\mathcal{O}},\mathcal{A}}$ is flat. At first glance, this multiplicity prevents a direct application of~ \cite{mizzoni2026switchingfeedbacklinearizingoutputsets}. However, if \emph{the original output $\yv$ is also flat for the same $\Sigma^{(\ell)}$}, the switching framework applies directly.

\subsection{Common‑Prolongation Families}
\label{subsec:common-prolongation-families}

\subsubsection*{Prolongations that admit flat input‑complements}
\begin{defn}
Let $\Sigma$ be as in~\eqref{eq:sys}, let $\yv\in\R^p$ be a flat output for $\Sigma$, and let $\mathcal{A}\in\mathcal{F}_{(\Sigma,\yv)}$ (Definition~\ref{defn:Ffamily}). 
Define the set of \emph{admissible prolongations} for $\mathcal{A}$ by
\[
\mathcal{L}^{\mathcal{A}}_{(\Sigma,\yv)}
:=\Big\{\ \ell\in\mathbb{N}_0^p\ \Big|\ \exists\,\mathcal{O}\in\mathcal{P}(\mathcal{I})\ \text{with}\ (\ell,\mathcal{O})\in\mathcal{C}^{\mathcal{A}}_{(\Sigma,\yv)}\ \Big\}.
\]
In particular, $\mathcal{L}^{\emptyset}_{(\Sigma,\yv)}$ collects all $\ell$ such that $\yv$ is a flat output for $\Sigma^{(\ell)}$.
\end{defn}

\subsubsection*{Pairs realizable under a fixed prolongation}
\begin{defn}\label{def:common_prolongation}
Let $\ell\in\mathcal{L}^{\emptyset}_{(\Sigma,\yv)}$. 
Define the \emph{$\ell$‑realizable family}
\[
\mathcal{N}^{\ell}_{(\Sigma,\yv)}
:= \Big\{\,(\mathcal{A},\mathcal{O})\in \mathcal{F}_{(\Sigma,\yv)}\times\mathcal{P}(\mathcal{I})\ \Big|\ (\ell,\mathcal{O})\in\mathcal{C}^{\mathcal{A}}_{(\Sigma,\yv)} \Big\}.
\]
Each element induces a flat output $\yv_{\overline{\mathcal{O}},\mathcal{A}}=[\yv_{\overline{\mathcal{O}}}^\top\ \ \uv_{\mathcal{A}}^\top]^\top$ for the \emph{same} prolonged system $\Sigma^{(\ell)}$. 
By convention, $(\emptyset,\emptyset)\in\mathcal{N}^{\ell}_{(\Sigma,\yv)}$ (the original output $\yv$).
\end{defn}

\subsubsection*{Zero‑transient switching under a common prolongation}
\begin{lem}[No‑transient switch under a common prolongation on shared components]
\label{lem:zero_transient_common_ell}
Let $\ell\in\mathcal{L}^{\emptyset}_{(\Sigma,\yv)}$ and consider any two elements 
$v_i=(\mathcal{A}_i,\mathcal{O}_i)$ and $v_j=(\mathcal{A}_j,\mathcal{O}_j)$ in $\mathcal{N}^{\ell}_{(\Sigma,\yv)}$.
Assume the associated flat outputs $\yv_{\overline{\mathcal{O}_i},\mathcal{A}_i}$ and $\yv_{\overline{\mathcal{O}_j},\mathcal{A}_j}$ are \emph{compatible} (Section~\ref{subs:preliminaries}). 
Then under the \emph{same} exact‑linearizing controller for $\Sigma^{(\ell)}$, a switch from regulating $\yv_{\overline{\mathcal{O}_i},\mathcal{A}_i}$ to regulating $\yv_{\overline{\mathcal{O}_j},\mathcal{A}_j}$ preserves the exponentially stable error dynamics of the \emph{shared} components, i.e., produces \emph{no transient} on outputs common to both selections.
\end{lem}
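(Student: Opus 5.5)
The plan is to reduce the statement to the meld-switching fact (i) recalled in Section~\ref{subs:preliminaries}. To do so, I will exhibit all the objects as a deck of scalar outputs on a single system, namely the prolonged system $\Sigma^{(\ell)}$.

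\emph{Step 1: build a common deck.} Fix $\ell\in\mathcal{L}^{\emptyset}_{(\Sigma,\yv)}$ and work on the state $\xv_\ell\in\R^{n_\ell}$ of $\Sigma^{(\ell)}$, with virtual input $\vv$. Take as deck the finite collection of scalar maps $\{h_1,\dots,h_p\}\cup\{u_1,\dots,u_p\}$. Each $u_k$ with $l_k\ge 1$ is a coordinate of $\xv_\ell$, so it is a smooth function on the prolonged state space. By Definition~\ref{def:common_prolongation}, every element $(\mathcal{A},\mathcal{O})\in\mathcal{N}^{\ell}_{(\Sigma,\yv)}$ selects $p$ entries of this deck, and $\yv_{\overline{\mathcal{O}},\mathcal{A}}$ is flat for $\Sigma^{(\ell)}$. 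Hence each such selection is a meld for the single system $\Sigma^{(\ell)}$. In particular, $v_i$ and $v_j$ are two melds of the same plant, sharing the index set of components $(\overline{\mathcal{O}_i}\cap\overline{\mathcal{O}_j})\cup(\mathcal{A}_i\cap\mathcal{A}_j)$.

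\emph{Step 2: check that relative degrees of shared components agree.} The key observation is that the relative degree of a scalar deck entry $h$ with respect to $\Sigma^{(\ell)}$ is intrinsic to $h$ and the plant. It is the first $k$ for which $L_{\gv_j}L_{\fv}^{k-1}h\not\equiv 0$ for some virtual input channel, and the corresponding row $[L_{\gv_j}L_{\fv}^{r-1}h]_j$ of the decoupling matrix is the same in both melds. It therefore does not depend on which other components are selected. The only caveat is that the relative degree must be constant, which I will impose on the intersection of the validity sets. Compatibility supplies an open set $\mathcal{U}\subseteq\mathcal{B}_\ell(\cdot;\Sigma^{(\ell)},\yv_{\overline{\mathcal{O}_i},\mathcal{A}_i})\cap\mathcal{B}_\ell(\cdot;\Sigma^{(\ell)},\yv_{\overline{\mathcal{O}_j},\mathcal{A}_j})$. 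On $\mathcal{U}$ each shared scalar output has one and the same relative degree $r_k$, and the row $y_k^{(r_k)}=b_k(\xv_\ell)+\mathbf{a}_k(\xv_\ell)\vv$ is identical for both selections.

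\emph{Step 3: show the error dynamics of shared components are unchanged.} The exact-linearizing controller for either meld sets $\vv=\Am^{-1}(-\bv+\wv)$, with $w_k=y_k^{d\,(r_k)}-\sum_{m<r_k}k_{k,m}(y_k^{(m)}-y_k^{d\,(m)})$ for every component; the gains $k_{k,m}$ are attached to the deck entry, not to the meld. Substituting this into the shared row gives $y_k^{(r_k)}=w_k$ before and after the switch. The error $e_k=y_k-y_k^d$ therefore obeys the same Hurwitz linear ODE on $\mathcal{U}$. Since $\xv_\ell(t)$ is continuous at the switching instant and the derivatives $y_k^{(m)}$ for $m<r_k$ are smooth functions of $\xv_\ell$ that are independent of the selection, the initial condition of this ODE is continuous across the switch. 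The error trajectory is thus a single exponentially decaying solution, with no transient. This is exactly item (i) of the cited meld result, which I will invoke directly once Steps 1--2 establish its hypotheses.

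\emph{Main obstacle.} The delicate point is Step 2: ensuring that the state remains in $\mathcal{U}$ at and after the switch, so that both linearizing feedbacks are well posed there. I would handle this by stating the conclusion for trajectories lying in $\mathcal{U}$ at the switching time, and by referring to item (ii), with its dwell-time and margin conditions, for keeping the state bounded and inside the validity sets. A secondary subtlety concerns channels with $l_k=0$. For those, $u_k$ is not a state and cannot be a deck entry. This is consistent, however, because for $k\in\mathcal{A}$ flatness of $\yv_{\overline{\mathcal{O}},\mathcal{A}}$ already forces the relevant $l_k\ge 1$, so that $u_k$ has relative degree $l_k$.
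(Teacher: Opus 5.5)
Your proposal is correct and follows the paper's own route. You realize both selections as melds of the single prolonged plant $\Sigma^{(\ell)}$ over the deck $\{h_k\}\cup\{u_k\}$, use compatibility to obtain a common validity neighborhood, and invoke meld-switching fact (i); your Steps~2--3 make explicit what the paper's sketch leaves implicit, namely that per-channel relative degrees and decoupling rows are intrinsic to each deck entry, the gains in $\wv$ are attached to channels, and the jets are continuous across the switch.

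One small aside: your claim that flatness forces $l_k\ge 1$ for $k\in\mathcal{A}$ holds under the strict state-output definition of Section~\ref{subs:preliminaries}, but the remark after Theorem~\ref{thm:dexterity-input} allows $l_j=0$ channels. Even then, a selected $u_k$ is pinned algebraically to $u_k^d$ by the feedback, so it cannot show a transient and your conclusion is unaffected.
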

\begin{proof}[Sketch]
Both outputs are flat for the same $\Sigma^{(\ell)}$, and their validity sets overlap (compatibility). 
By~ \cite{mizzoni2026switchingfeedbacklinearizingoutputsets}, switching between such melds preserves the closed‑loop error dynamics of shared components under the same linearizing feedback; here the input channels in $\mathcal{A}$ are part of the output in one selection and are smoothly driven (via $\ell$) in the other, which does not affect the shared‑output dynamics. \qedhere
\end{proof}

\begin{rem}
A common prolongation $\ell$ is \emph{sufficient} to implement a single controller $\mathcal{K}_{\Sigma^{(\ell)}}$ across selections; however, if two outputs are not compatible (disjoint validity sets), switches attempted outside the intersection of validity sets (where the decoupling matrix for one selection is singular) are infeasible, regardless of $\ell$—this mirrors the standard requirement in exact linearization~ \cite{Isidori1995}.
\end{rem}

\subsection{Graph Representation and Negotiability}
\label{subsec:graph-negotiability}

\subsubsection*{Compatibility graph}
\begin{defn}[Compatibility graph under a common prolongation]
Let $\ell\in\mathcal{L}^{\emptyset}_{(\Sigma,\yv)}$. 
Define the undirected graph $\mathcal{G}^{\ell}=(\mathcal{N}^{\ell}_{(\Sigma,\yv)},\mathcal{E}^{\ell})$ with vertices $v=(\mathcal{A},\mathcal{O})\in\mathcal{N}^{\ell}_{(\Sigma,\yv)}$ and edges defined by
\[
(v_i,v_j)\in\mathcal{E}^{\ell}
\iff
\yv_{\overline{\mathcal{O}_i},\mathcal{A}_i}\ \text{and}\ \yv_{\overline{\mathcal{O}_j},\mathcal{A}_j}\ \text{are compatible on}\ \Sigma^{(\ell)}.
\]
\end{defn}

Let $\mathcal{G}^{\ell}_1,\ldots,\mathcal{G}^{\ell}_{k_\ell}$ be the connected components of $\mathcal{G}^{\ell}$ with vertex sets $\mathcal{N}^{\ell}_1,\ldots,\mathcal{N}^{\ell}_{k_\ell}$, respectively. 
Since $\ell\in\mathcal{L}^{\emptyset}_{(\Sigma,\yv)}$, the vertex $(\emptyset,\emptyset)$ is present and belongs to exactly one component.

\subsubsection*{Negotiable sets and their union}
\begin{defn}[Graph of $\ell$‑negotiable outputs]
For each $\ell\in\mathcal{L}^{\emptyset}_{(\Sigma,\yv)}$, the \emph{graph of $\ell$‑negotiable outputs} is the connected component of $\mathcal{G}^{\ell}$ that contains $(\emptyset,\emptyset)$:
\[
\mathcal{G}^{\ell}_{\star}=(\mathcal{N}^{\ell}_{\star,(\Sigma,\yv)},\mathcal{E}^{\ell}_{\star}).
\]
Its vertex set $\mathcal{N}^{\ell}_{\star,(\Sigma,\yv)}$ is the \emph{$\ell$‑negotiable set}. 
If needed, enumerate its elements as 
\(
\mathcal{N}^{\ell}_{\star,(\Sigma,\yv)}=\{(\mathcal{A}^{\ell}_i,\mathcal{O}^{\ell}_i)\}_{i=1}^{d_\ell},
\)
with $d_\ell:=|\mathcal{N}^{\ell}_{\star,(\Sigma,\yv)}|$. 
Finally, define the overall \emph{negotiable set}
\[
\mathcal{N}_{\star,(\Sigma,\yv)} \;:=\; \bigcup_{\ell\in\mathcal{L}^{\emptyset}_{(\Sigma,\yv)}} \mathcal{N}^{\ell}_{\star,(\Sigma,\yv)}.
\]
\end{defn}

\begin{rem}[Interpretation]
\label{rem:negotiability}
The term “$\ell$‑negotiable” reflects that both $\yv$ and any $\yv_{\overline{\mathcal{O}},\mathcal{A}}=[\yv_{\overline{\mathcal{O}}}^\top\ \ \uv_{\mathcal{A}}^\top]^\top$ with $\big((\emptyset,\emptyset),(\mathcal{A},\mathcal{O})\big)\in\mathcal{E}^{\ell}_{\star}$ are:
(i) flat outputs for the \emph{same} prolonged system $\Sigma^{(\ell)}$, and 
(ii) compatible (their validity sets intersect). 
By Lemma~\ref{lem:zero_transient_common_ell} and~ \cite{mizzoni2026switchingfeedbacklinearizingoutputsets}, one can “negotiate’’ between such outputs—i.e., swap components \(\uv_{\mathcal{A}}\) and \(\yv_{\mathcal{O}}\) in the linearizing output—without introducing transients on the shared components.
\end{rem}

\begin{table*}[t]
\centering
\caption{Summary of main symbols}
\label{tab:symbols}
\begin{tabular}{ll}
\hline
\textbf{Symbol} & \textbf{Meaning} \\
\hline
$\Sigma$ & Input-affine system $\dot \xv = \fv(\xv) + \sum_{i=1}^p \gv_i(\xv)u_i$ \\
$\xv \in \mathbb{R}^n$ & State \\
$\uv \in \mathbb{R}^p$ & Input vector, $\uv=[u_1\ \cdots\ u_p]^\top$ \\
$\yv=\hv(\xv)\in\mathbb{R}^p$ & Output (task), assumed flat (linearizing) on a validity set \\
$\mathcal{I}=\{1,\dots,p\}$ & Index set for input/output channels \\
$\mathcal{S}\subseteq \mathcal{I}$, $\overline{\mathcal{S}}$ & Subset of indices and its complement $\overline {\mathcal{S}} = \mathcal{I}\setminus \mathcal{S}$ \\
$\qv_{\mathcal{S}}$ & Subvector of $\qv\in \mathbb{R}^p$ with indices in $\mathcal{S}$ (preserving order) \\
$\mathrm{merge}_S(\qv',\qv'')$ & Merge operator: takes entries from $\qv'$ on $\mathcal{S}$ and from $\qv''$ on $\overline {\mathcal{S}}$ \\
$\rv=[r_1\ \cdots\ r_p]^\top$ & A (Vector) relative degree of $(\Sigma,\yv)$ \\
$|\rv|=\sum_i r_i$ & Sum of relative degrees \\
$\Am_{(\Sigma,\yv^{(\rv)})}(\xv)$ & Decoupling matrix appearing at $\yv^{(\rv)}$ of the system $\Sigma$\\
$\mathcal{B}(\xv^\circ;\Sigma,\yv)$ & Validity set of $\yv$ around $\xv^\circ$. It is defined as an open neighborhood  of $\xv^\circ$ where for all $\xv \in \mathcal{B}$, $|\Am(\xv)|\neq 0$. \\
$\Sigma_{\overline{\mathcal{A}}}$ & Reduced system keeping inputs in $\overline{\mathcal{A}}\subseteq \mathcal{I}$  \\
$\ell=(l_1,\dots,l_p)\in\mathbb{N}_0^p$ & Prolongation pattern (integrator-chain lengths per input) \\
$\mathbb{N}_0^p\big|_{\overline{\mathcal{A}}}$ & Patterns with $l_i=0$ for all $i\in {\mathcal{A}}$ (no chains on removed inputs) \\
$|\ell|=\sum_i l_i$ & Total prolongation order \\
$\Sigma^{(\ell)}$ & Prolonged system (dynamic extension) with virtual input $\vv=\uv^{(\ell)}$ \\
$\Sigma_{\overline{\mathcal{A}}}^{(\ell)}$ & System obtained by interconnecting $\Sigma_{\overline{\mathcal{A}}}$ with $p-|\mathcal{A}|$ chains of integrators of lengths $l_i$ on the surviving inputs\\
$\xv_\ell$ & Stacked prolonged state (includes $\xv$ and input-derivative stacks) \\
$\mathcal{A} \in \mathcal{D}_{(\Sigma,\yv)}$ & Dexterity subset: removing inputs in $A$ preserves flatness of a reduced task \\
$\delta^i_{(\Sigma,\yv)}\in\{1,\ldots,p-1\}$ & Minimum dexterity loss for input $u_i$ (min $|\mathcal{A}|$ over dexterity sets containing $i$) \\
$\mathcal{A}\in \mathcal{F}_{(\Sigma,\yv)}$ & Flat-input-complement subset (inputs in $\mathcal{A}$ can enter the output under prolongation) \\
$\mathcal{F}^{\mathbf{0}}_{(\Sigma,\yv)}$ & Zero-compatible flat-input-complement subsets (validity set intersects zero surface) \\
$\mathcal{C}^{\mathcal{A}}_{(\Sigma,\yv)}$ & Realizing pairs $(\ell,\mathcal{O})$ making the required output flat (for dexterity or complement) \\
$\mathcal{L}^{\mathcal{A}}_{(\Sigma,\yv)}$ & Admissible prolongations $\ell$ for a given $\mathcal{A}$ \\
$\mathcal{N}^\ell_{(\Sigma,\yv)}$ & $\ell$-realizable pairs $(\mathcal{A},\mathcal{O})$ yielding flat outputs $[\yv_{\overline{\mathcal{O}}}^\top\ \uv_{\mathcal{A}}^\top]^\top$ for the prolonged system $\Sigma^\ell$\\
$\mathcal{G}^\ell=(\mathcal{N}^\ell,\mathcal{E}^\ell)$ & Compatibility graph under fixed $\ell$ \\
$\mathcal{G}^\ell_\star$ & Connected component of $\mathcal{G}_\ell$ containing $(\varnothing,\varnothing)$ (negotiable set) \\
$\eta(t)$ & Switching signal selecting a vertex (i.e., a pair $(\mathcal{A},\mathcal{O})$) \\
$\yv_{\eta(t)}$ & Active output vector selected by the switching signal $\eta$\\
$\Gammam_{\eta(t)}$ & Selection matrix used to build the active output $\yv_{\eta(t)}$ \\
$\wv\in\mathbb{R}^p$ & Stacked reference + error-feedback term for the virtual input \\
$\vv\in\mathbb{R}^p$ & Virtual input of prolonged system ($\vv=\uv^{(\ell)}$) \\
\hline
\end{tabular}
\end{table*}

\subsection{Summary}
\label{subsec:summary-sets}

Figure~\ref{fig:setDrawing} summarizes the sets introduced throughout the section. For ease of reference, we list the core definitions:
\[
\small
\begin{aligned}
\mathcal{F}_{(\Sigma,\yv)}
&=\big\{\mathcal{A}\subset\mathcal{I}\ \big|\ \mathcal{A}\ \text{ flat-input complement subset for }(\Sigma,\yv)\big\},\\
\mathcal{L}^{\mathcal{A}}_{(\Sigma,\yv)}
&= \Big\{\ \ell\in\mathbb{N}_0^{p}\ :\ \exists\,\mathcal{O}\in\mathcal{P}(\mathcal{I})\ \text{s.t.}\ (\ell,\mathcal{O})\in\mathcal{C}^{\mathcal{A}}_{(\Sigma,\yv)}\ \Big\},\\[2pt]
\Omega^{\mathcal{A}}_{(\Sigma,\yv)}
&= \Big\{\ \mathcal{O}\in\mathcal{P}(\mathcal{I})\ :\ \exists\,\ell\in\mathbb{N}_0^{p}\ \text{s.t.}\ (\ell,\mathcal{O})\in\mathcal{C}^{\mathcal{A}}_{(\Sigma,\yv)}\ \Big\},\\[2pt]
\mathcal{N}^{\ell}_{(\Sigma,\yv)}
&= \Big\{\ (\mathcal{A},\mathcal{O})\in\mathcal{F}_{(\Sigma,\yv)}\times\mathcal{P}(\mathcal{I})\ :\ (\ell,\mathcal{O})\in\mathcal{C}^{\mathcal{A}}_{(\Sigma,\yv)}\ \Big\},\\[2pt]
\mathcal{C}^{\mathcal{A}}_{(\Sigma,\yv)}
&= \Big\{\,(\ell,\mathcal{O})\in\mathbb{N}_0^{p}\times\mathcal{P}(\mathcal{I})\ \Big|\ 
\yv_{\overline{\mathcal{O}},\mathcal{A}} \ \text{is a flat output for}\ \Sigma^{(\ell)}\ \Big\}.
\end{aligned}
\]
\begin{figure}[t!]
    \centering
\includegraphics[width=0.90\linewidth]{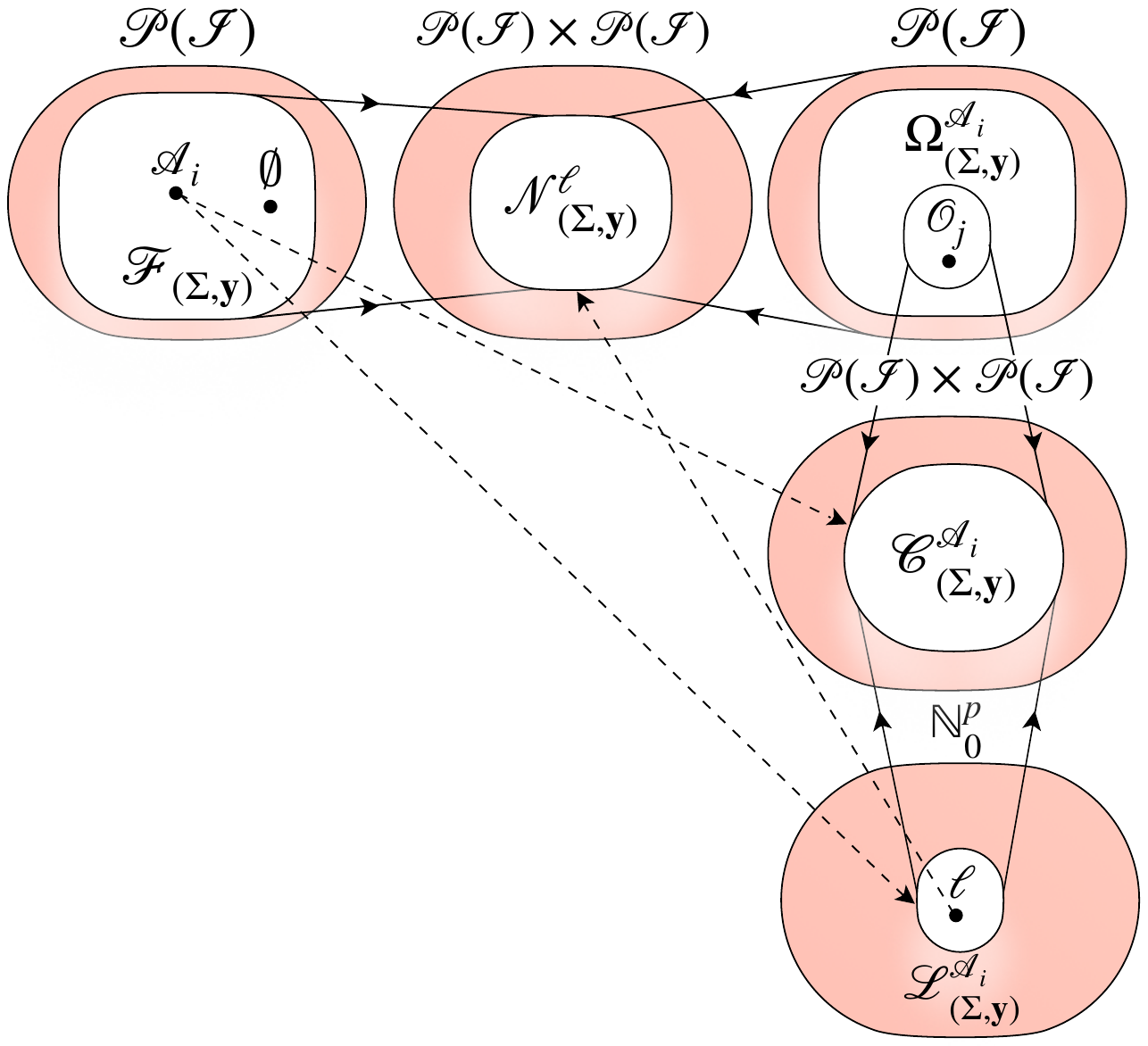}
    \caption{Schematic of the families introduced in this work: admissible prolongations $\mathcal{L}^{\mathcal{A}}_{(\Sigma,\yv)}$; admissible index sets $\Omega^{\mathcal{A}}_{(\Sigma,\yv)}$; $\ell$‑realizable family $\mathcal{N}^{\ell}_{(\Sigma,\yv)}$; and realizing pairs $\mathcal{C}^{\mathcal{A}}_{(\Sigma,\yv)}$.}
    \label{fig:setDrawing}
\end{figure}

\section{Switching Control Between Negotiable Pairs with a Common Prolongation}
\label{sec:switching-negotiable-common-ell}

We now provide a control scheme that enables switching between the original flat output $\yv$ and any adjacent flat output indexed by a vertex in the $\ell$‑negotiable set $\mathcal{N}^{\ell}_{\star,(\Sigma,\yv)}$ (Definition~\ref{def:common_prolongation}), while inducing \emph{no undesired transient} on the shared output components. This addresses the question posed at the start of Section~\ref{sec:negotiation-control}.

\subsubsection*{Unified exact‑linearizing structure on $\Sigma^{(\ell)}$}

Let  $\xv_\ell^\circ$ an operating point. Fix any $\ell\in\mathcal{L}^{\emptyset}_{(\Sigma,\yv)}$. By definition, $\yv$ is a flat output for $\Sigma^{(\ell)}$, hence there exists a validity neighborhood $\mathcal{B}_{\ell}$ of $\xv^\circ_\ell$ where
\[
\yv^{(\rv)} \;=\; \bv(\xv_\ell)\;+\;\Am_{(\Sigma^{(\ell)},\,\yv^{(\rv)})}(\xv_\ell)\,\vv,
\qquad |\rv|=n+|\ell|,\ \ \rank \Am = p.
\]
Introduce the stacked map and block matrix
\[
\qv(\xv_\ell) \;:=\; \begin{bmatrix}\bv(\xv_\ell)\\[2pt]\mathbf{0}_{p}\end{bmatrix}\in\R^{2p},\;\;
\Dm(\xv_\ell) \;:=\; 
\begin{bmatrix}
\Am_{(\Sigma^{(\ell)},\,\yv^{(\rv)})}(\xv_\ell)\\[2pt]
\mathbf{I}_{p}
\end{bmatrix}\in\R^{2p\times p}.
\]
Thus, for any selection of $p$ scalar channels composed of $\yv$ and $\uv$ (with appropriate derivative orders), the corresponding linearizing relation is obtained by \emph{selecting rows} of $\qv$ and $\Dm$.

\subsubsection*{Selection along the negotiable graph}
Let $\eta:(0,\infty)\to\{1,\ldots,d_\ell\}$ be a switching signal taking values on the vertex set of $\mathcal{G}^{\ell}_{\star}$, where the vertex $v_{\eta(t)}$ corresponds to the pair $(\mathcal{A}^{\ell}_{\eta(t)},\mathcal{O}^{\ell}_{\eta(t)})$. Define the selection matrix $\Gammam_{\eta(t)}\in\R^{p\times 2p}$ as
\[
\Gammam_{\eta(t)} \;=\;
\begin{bmatrix}
\Gammam^{\yv}_{\eta(t)} & \mathbf{0}\\
\mathbf{0} & \Gammam^{\uv}_{\eta(t)}
\end{bmatrix},
\]
\[
\Gammam^{\yv}_{\eta(t)}\in\R^{(p-|\mathcal{O}^{\ell}_{\eta(t)}|)\times p},\quad
\Gammam^{\uv}_{\eta(t)}\in\R^{|\mathcal{O}^{\ell}_{\eta(t)}|\times p},
\]
where
\[
\Gammam^{\yv}_{\eta(t)} \;=\; \big[\;\cdots\ \ev_{j}\ \cdots\big]^\top_{j\in\overline{\mathcal{O}^{\ell}_{\eta(t)}}},
\qquad
\Gammam^{\uv}_{\eta(t)} \;=\; \big[\;\cdots\ \ev_{j}\ \cdots\big]^\top_{j\in\mathcal{A}^{\ell}_{\eta(t)}}.
\]
Here $\ev_j$ is the $j$‑th canonical vector of $\R^{p}$. The selected output is then
\begin{equation}
\yv_{\eta(t)} \;:=\; \Gammam_{\eta(t)}\!\begin{bmatrix}\yv\\ \uv\end{bmatrix}
\quad\in\R^{p},
\end{equation}
and its derivative orders are compactly encoded as
\begin{equation}
\rv_{\eta(t)} \;:=\; \Gammam_{\eta(t)}\,\begin{bmatrix}\rv\\ \ell\end{bmatrix}\ \in\ \mathbb{N}_0^{p}.
\end{equation}
With these definitions, on $\mathcal{B}_{\ell}$ we have the linear decoupled form
\begin{equation}
\label{eq:selected-io}
\yv_{\eta(t)}^{(\rv_{\eta(t)})} \;=\; \Gammam_{\eta(t)}\,\qv(\xv_\ell)\;+\;\Gammam_{\eta(t)}\,\Dm(\xv_\ell)\,\vv,
\end{equation}
and $\Gammam_{\eta(t)}\Dm(\xv_\ell)\in\R^{p\times p}$ is nonsingular (it is the decoupling matrix for the currently active selection).

\subsubsection*{Reference shaping and virtual inputs}
For each channel $i$, define the tracking vectors (or commonly known as $r_i-1$ ($l_i-1$)-\emph{jets},~\cite{Saunders1989}):
\[
\overline{\yv}_i:=\big[y_i,\dot y_i,\ldots,y_i^{(r_i-1)}\big]^{\!\top},\qquad
\overline{\uv}_i:=\big[u_i,\dot u_i,\ldots,u_i^{(l_i-1)}\big]^{\!\top},
\]
and reference stacks $\overline{\yv}^d_i$ and $\overline{\uv}^d_i$ built similarly from $y_i^d(\cdot)$ and $u_i^d(\cdot)$. Let the gains
\(
\kv_{y,i}^\top=[k_{y,i}^0\,\cdots\,k_{y,i}^{(r_i-1)}]
\)
and
\(
\kv_{u,i}^\top=[k_{u,i}^0\,\cdots\,k_{u,i}^{(l_i-1)}]
\)
be chosen so that the polynomials 
\(
\lambda^{r_i}+k_{y,i}^{(r_i-1)}\lambda^{r_i-1}+\cdots+k_{y,i}^0
\)
and 
\(
\lambda^{l_i}+k_{u,i}^{(l_i-1)}\lambda^{l_i-1}+\cdots+k_{u,i}^0
\)
are Hurwitz. Define the (time‑varying) $2p$‑vector
\begin{equation}
\label{eq:w}
\wv\;:=\;
\begin{bmatrix}\yv^{d,(\rv)}\\ \uv^{d,(\ell)}\end{bmatrix}
\;+\;
\sum_{i=1}^{p}\kv_{y,i}^\top(\overline{\yv}^d_i-\overline{\yv}_i)\,\ev_i
\;+\;
\sum_{i=1}^{p}\kv_{u,i}^\top(\overline{\uv}^d_i-\overline{\uv}_i)\,\ev_{p+i},
\end{equation}
where now $\ev_k$ denotes the $k$‑th canonical vector of $\R^{2p}$.

\subsubsection*{Switching law}
At each time $t$, apply the linearizing input
\begin{equation}
\label{eq:switching-law}
\uv \;=\; \big(\Gammam_{\eta(t)}\Dm(\xv_\ell)\big)^{-1}\,\big[-\Gammam_{\eta(t)}\qv(\xv_\ell)+\Gammam_{\eta(t)}\wv\big].
\end{equation}
on the prolonged system $\Sigma^{(\ell)}$. By \eqref{eq:selected-io}–\eqref{eq:switching-law}, the closed‑loop error dynamics on each active channel are linear time‑invariant with preassigned characteristic polynomials, independent of which indices come from $\yv$ or from $\uv$.

\subsubsection*{Main guarantee}
\begin{prop}
Consider $\Sigma$ and fix $\ell\in\mathcal{L}^{\emptyset}_{(\Sigma,\yv)}$. Let $\eta(\cdot)$ take values on the vertex set of the graph $\mathcal{G}^{\ell}_{\star}$ (Section~\ref{subsec:graph-negotiability}). Assume that every \emph{consecutive} switch is along an edge of $\mathcal{G}^{\ell}_{\star}$, i.e.,
\(
\big(v_{\eta(t_k)},\,v_{\eta(t_{k+1})}\big)\in\mathcal{E}^{\ell}_{\star}
\qquad \text{for all consecutive switching instants } t_k<t_{k+1}.
\)
Then the closed‑loop prolonged system under \eqref{eq:switching-law} satisfies:
\begin{enumerate}
\item For any interval $[t_k,t_{k+1})$, the error dynamics are exponentially stable for all channels indexed by
\(
\bigcap_{t\in[t_k,t_{k+1})}\overline{\mathcal{O}}^{\ell}_{\eta(t)}
\)
and
\(
\bigcap_{t\in[t_k,t_{k+1})}\mathcal{A}^{\ell}_{\eta(t)}
\)
(i.e., no transient on the outputs shared across the switch).
\item If the desired trajectories are bounded and the switching dwell‑time satisfies the standard condition in~ \cite{mizzoni2026switchingfeedbacklinearizingoutputsets}, then the closed‑loop state $\xv_\ell$ is uniformly bounded.
\end{enumerate}
\end{prop}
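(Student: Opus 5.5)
The argument splits along the two items. For item (i), I would substitute the switching law \eqref{eq:switching-law} into the selected input–output relation \eqref{eq:selected-io}. On each interval where $\eta$ is constant and $\xv_\ell\in\mathcal{B}_\ell$, the matrix $\Gammam_{\eta(t)}\Dm(\xv_\ell)$ is the decoupling matrix of the active flat output. By the definition of $\mathcal{N}^{\ell}_{(\Sigma,\yv)}$ this matrix is nonsingular, and the substitution yields
\[
\yv_{\eta(t)}^{(\rv_{\eta(t)})}=\Gammam_{\eta(t)}\wv .
\]
Reading this row by row with \eqref{eq:w}, each active $y$-channel $i\in\overline{\mathcal{O}}^{\ell}_{\eta(t)}$ satisfies
\[
e_i^{(r_i)}+k_{y,i}^{(r_i-1)}e_i^{(r_i-1)}+\cdots+k^0_{y,i}e_i=0,\qquad e_i:=y_i-y_i^d .
\]
Each active $u$-channel $j\in\mathcal{A}^{\ell}_{\eta(t)}$ satisfies the analogous equation with the gains $\kv_{u,j}$. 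The key observation is that the error equation of a channel depends only on that channel's own gains and references, which are fixed once and for all in $\wv$. It does not depend on which other rows $\Gammam_{\eta(t)}$ selects.

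The next step is continuity across a switch at $t_k$. The plant state $\xv_\ell$ evolves continuously, because $\vv$ is bounded on compact subsets of $\mathcal{B}_\ell$, so every jet $\overline{\yv}_i$ and $\overline{\uv}_j$ is continuous in $t$. For a channel in $\overline{\mathcal{O}}^{\ell}_{\eta(t_{k-1})}\cap\overline{\mathcal{O}}^{\ell}_{\eta(t_k)}$ (or in the intersection of the $\mathcal{A}$ sets), the error jet is therefore continuous at $t_k$ and obeys the same Hurwitz ODE on both sides. By uniqueness of solutions, its trajectory is the unswitched exponentially decaying one. This is precisely Lemma~\ref{lem:zero_transient_common_ell}, and compatibility of consecutive vertices (edges of $\mathcal{G}^\ell_\star$) guarantees that both decoupling matrices are invertible at the switching state. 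Induction over consecutive switching instants then extends the claim to all channels in the stated intersections.

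For item (ii), I would reduce to the meld-switching result of \cite{mizzoni2026switchingfeedbacklinearizingoutputsets}, item (ii) of the preliminaries. The point is that every vertex of $\mathcal{G}^\ell_\star$ is a meld for the single system $\Sigma^{(\ell)}$ drawn from the deck $\{y_1,\ldots,y_p,u_1,\ldots,u_p\}$. Each such meld is flat with $|\rv_{\eta}|=n_\ell$. Consequently, on each validity set the map $\xv_\ell\mapsto$(jets of the active output) is a local diffeomorphism, and boundedness of the jets gives boundedness of $\xv_\ell$. Within an interval, the jets are bounded because the references are bounded and the errors decay exponentially. At a switch, the newly activated channels may start with arbitrary errors, but these errors are bounded functions of $\xv_\ell$ on the compact overlap.

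The main obstacle is keeping $\xv_\ell$ inside the intersection of validity sets, so that the invertibility used in item (i) stays valid and the coordinate changes have uniformly bounded inverses. Without this, newly activated channels, such as $u_j$ being driven to zero, could push the state out of $\mathcal{B}_\ell$. I would handle this by invoking the dwell-time and margin condition from \cite{mizzoni2026switchingfeedbacklinearizingoutputsets}. Choose a compact subset of each pairwise overlap with margin, and bound the transient of the inactive-then-active channels using the Hurwitz exponential bounds. The dwell time is then taken long enough that the errors contract below the margin before the next switch, which by induction keeps $\xv_\ell$ in a fixed compact set for all time.
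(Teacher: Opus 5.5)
Your proposal follows the paper's route. For item 1 you rely on the common linearizing law on $\Sigma^{(\ell)}$ and Lemma~\ref{lem:zero_transient_common_ell}, which you usefully unpack: substituting \eqref{eq:switching-law} into \eqref{eq:selected-io} gives per-channel Hurwitz error equations fixed by $\wv$, and continuity of $\xv_\ell$ across $t_k$ carries them through the switch. For item 2 you defer to the dwell-time boundedness result of \cite{mizzoni2026switchingfeedbacklinearizingoutputsets}, exactly as the paper does. One correction: compatibility only gives a nonempty open overlap of the two validity sets, not invertibility of both decoupling matrices at the actual switching state. That membership is therefore an implicit well-posedness assumption (shared with the paper), or it must come from the dwell-time/margin argument you invoke for item 2.
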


\begin{proof}
Item 1 follows from Lemma~\ref{lem:zero_transient_common_ell}: consecutive vertices joined by an edge correspond to compatible outputs for the \emph{same} prolonged system $\Sigma^{(\ell)}$, hence shared components preserve their exponentially stable dynamics under the common linearizing law \eqref{eq:switching-law}. Item 2 is a direct application of the state‑boundedness result (with dwell‑time) in~ \cite{mizzoni2026switchingfeedbacklinearizingoutputsets}.
\end{proof}

A summary of the main symbols introduced in this work is provided in Table~\ref{tab:symbols}.

\section{Application to Rigid-Body Flight Dynamics}\label{sec:rigid-body}

We illustrate the framework on a mechanically relevant use case with direct impact in robotics and aerospace. 
Consider the fully actuated flying platform $\Sigma$ (six independent force/torque channels in the body frame), whose dynamics (locally and away from Euler singularities) can be written as
\begin{equation}
\begin{aligned}
\Sigma:\;
\begin{bmatrix}
\dot{\pv}\\ \dot{\vv}\\ \dot{\Phim}\\ \dot{\Omegav}
\end{bmatrix}
&=
\underbrace{
\begin{bmatrix}
\vv\\
-\,g\,\ev_3\\
\Tm(\Phim)\,\Omegav\\
-\,\Jm^{-1}\big(\Omegav\times \Jm\Omegav\big)
\end{bmatrix}
}_{\fv(\xv)}
\;+\;
\underbrace{
\begin{bmatrix}
\mathbf 0 & \mathbf 0\\[2pt]
\frac{1}{m}\,\Rm(\Phim) & \mathbf 0\\[2pt]
\mathbf 0 & \mathbf 0\\[2pt]
\mathbf 0 & \Jm^{-1}
\end{bmatrix}
}_{[\gv_1\ \cdots\ \gv_6]}
\begin{bmatrix}\fv\\ \tauv\end{bmatrix},\\[2pt]
\yv &= \begin{bmatrix}\pv^\top & \Phim^\top\end{bmatrix}^\top,
\end{aligned}
\label{eq:sigma}
\end{equation}
where $\pv,\vv\in\mathbb{R}^3$ are position and linear velocity in the inertial frame, 
$\Phim=[\phi\ \theta\ \psi]^\top$ are roll–pitch–yaw angles, 
$\Omegav\in\mathbb{R}^3$ is the body angular velocity, $m>0$ is the mass, $\Jm\in\mathbb{R}^{3\times 3}$ is constant, symmetric, and positive definite, $g>0$ is gravity, $\ev_3=[0\ 0\ 1]^\top$, and $\Rm(\Phim)\in \SO$ is the body-to-inertial rotation. 
The kinematic map $\Tm(\Phim)$ is the standard Euler transformation (invertible for $\theta\neq \pm \tfrac{\pi}{2}$). 
We collect the inputs as
\(
\uv=\begin{bmatrix}u_1&\cdots&u_6\end{bmatrix}^\top
=\begin{bmatrix}\fv^\top & \tauv^\top\end{bmatrix}^\top.
\)
The state dimension is $n=12$.

\subsection{Characterization of the Inputs}\label{subsec:rb-character}
For~\eqref{eq:sigma}, the pose output $\yv=[\pv^\top\ \Phim^\top]^\top$ has vector relative degree 
$r_{\pv}=2$ w.r.t.\ force channels and $r_{\Phim}=2$ w.r.t.\ torque channels, hence $|\rv|=12=n$ and $\yv$ is a flat output on a nonempty validity set (Section~\ref{subs:preliminaries}). 
We now classify the inputs relative to this task.

\begin{prop}\label{prop:dex_thrust}
The three force channels $u_1,u_2,u_3$ are \emph{dexterity inputs} for $(\Sigma,\yv)$ with minimum loss $\delta^1_{(\Sigma,\yv)}=\delta^2_{(\Sigma,\yv)}=\delta^3_{(\Sigma,\yv)}=1$.
\end{prop}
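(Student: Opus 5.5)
For each $i\in\{1,2,3\}$ I will show that $\{i\}\in\mathcal{D}_{(\Sigma,\yv)}$ directly from Definition~\ref{defn:dextsubset}, by exhibiting a realizing pair $(\ell,\mathcal{O})$. The minimum-loss claim then follows at once. Indeed, every $\mathcal{A}\in\mathcal{D}^i_{(\Sigma,\yv)}$ has $|\mathcal{A}|\ge 1$, so $\{i\}$ attains the minimum and $\delta^i_{(\Sigma,\yv)}=1$.

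Write $\{j,k\}=\{1,2,3\}\setminus\{i\}$. I prolong the two surviving force channels twice and leave the torques unprolonged, i.e. $l_j=l_k=2$ and $l_i=l_4=l_5=l_6=0$. This gives $\ell\in\mathbb{N}_0^6|_{\overline{\{i\}}}$ and a state dimension of $n+|\ell|=16$ for $\Sigma^{(\ell)}_{\overline{\{i\}}}$. For the output I keep all of $\pv$ and drop one Euler angle, so $\mathcal{O}=\{o\}$ with $o\in\{4,5,6\}$ still to be chosen.

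The relative degrees are obtained as follows. Set $\fv=f_j\ev_j+f_k\ev_k$.
\begin{itemize}
\item $m\ddot\pv=\Rm\fv-mg\ev_3$ contains no torque.
\item $m\pv^{(3)}=\Rm(\Omegav\times\fv+\dot\fv)$ contains only prolonged states.
\item $m\pv^{(4)}=\Rm\big(\ddot\fv-\fv\times\dot\Omegav\big)+(\text{state terms})$, with $\dot\Omegav=\Jm^{-1}\tauv+(\text{state})$.
\item $\ddot\Phim=\Tm\dot\Omegav+(\text{state})$.
\end{itemize}
Hence $\rv=(4,4,4,2,2)$ and $|\rv|=16$, which matches the prolonged state dimension. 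The $5\times5$ decoupling matrix, with columns ordered $(\ddot f_j,\ddot f_k,\tauv)$, is
\[
\begin{bmatrix}\tfrac1m\Rm[\ev_j\ \ev_k] & -\tfrac1m\Rm[\fv]_\times\Jm^{-1}\\ \mathbf 0 & \Sigma_o\Tm\Jm^{-1}\end{bmatrix},
\]
where $\Sigma_o$ selects the two kept angles.

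The key step is nonsingularity. Take a vector in the kernel and write $\wv=\Jm^{-1}\tauv$.
\begin{itemize}
\item The angle rows force $\wv=c\,\Tm^{-1}\ev_{o}$ for some scalar $c$.
\item The position rows then require $f\times\wv\in\operatorname{span}\{\ev_j,\ev_k\}$, i.e. $(\fv\times\wv)\cdot\ev_i=\wv\cdot(\ev_i\times\fv)=0$.
\end{itemize}
So the matrix is nonsingular exactly when $(\Tm^{-1}\ev_o)\cdot(\ev_i\times\fv)\neq0$. It therefore suffices to exhibit one point with $\fv\neq0$ and $\theta\neq\pm\pi/2$ where this holds. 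By continuity it then holds on an open neighbourhood, on which the relative degree is also constant, and that neighbourhood is a nonempty validity set.

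The choice of point and of $o$ is made case by case.
\begin{itemize}
\item For $i=1$: take hover at $\Phim=0$ with $\fv=f_3\ev_3\neq0$. Then $\ev_1\times\fv\propto\ev_2$ and $\Tm=\mathbf I$, so dropping pitch ($o=5$) works.
\item For $i=2$: the same hover point gives $\ev_2\times\fv\propto\ev_1$, so dropping roll ($o=4$) works.
\end{itemize}

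I expect the case $i=3$ to be the main obstacle. With only in-plane body forces, gravity compensation requires a tilted attitude, and the chosen point must also avoid the Euler singularity. I would take $\phi=\pm\pi/2$, $\theta=0$ with $\fv=f_2\ev_2$. Then $\ev_3\times\fv\propto\ev_1$, and at $\theta=0$ we have $\Tm^{-1}\ev_1=\ev_1$, so dropping roll ($o=4$) gives a nonzero determinant. I would double-check this case numerically before relying on it.
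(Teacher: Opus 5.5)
Your proof is correct, but it takes a different route from the paper.

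\textbf{The paper's route.} It argues through Theorem~\ref{thm:dexterity-input}. It fixes the \emph{full} prolongation $\ell=(2,2,2,0,0,0)$, so $l_i=2$ also on the removed channel. It then points to Table~\ref{tab:melds_fly_robot_compact} for augmented outputs such as $[\pv^\top\ \phi\ \psi\ f_1]^\top$. These are flat on $\Sigma^{(\ell)}$, and their validity sets meet the zero surface $f_i=\dot f_i=0$. The decoupling computations themselves are delegated to the table.

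\textbf{Your route.} You check Definition~\ref{defn:dextsubset} directly on $\Sigma^{(\ell)}_{\overline{\{i\}}}$ with $l_i=0$. Your kernel argument yields the closed-form condition $(\Tm^{-1}\ev_o)\cdot(\ev_i\times\fv)\neq 0$. This agrees with the paper. With $\fv=f_j\ev_j+f_k\ev_k$ it reproduces, up to sign, exactly the exclusions of entries \#2, \#4 and \#8, namely $-f_3\cos\phi-f_2\sin\phi$, $f_3$ and $f_2$. The reason is that your $5\times 5$ matrix is the upper-left block of the paper's block-triangular decoupling matrix, evaluated on the zero surface.

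\textbf{What each buys.} Your route is more elementary and self-contained. It does not lean on the equivalence theorem, and it makes the validity-set conditions explicit rather than tabulated. The paper's route also delivers flat outputs on a prolongation $\ell\in\mathcal{L}^{\emptyset}_{(\Sigma,\yv)}$ shared with $\yv$. That common prolongation is what the later negotiation and switching construction actually uses.

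\textbf{On the case $i=3$.} Your worry here is unnecessary. A validity set need not contain an equilibrium, so gravity compensation plays no role. Moreover, $\Tm^{-1}\ev_1=\ev_1$ at every attitude in the roll--pitch--yaw parametrization, not only at $\theta=0$; the angle-independent exclusion $f_2=0$ of entry \#8 reflects this. Hence any point with $f_2\neq 0$ and $\theta\neq\pm\pi/2$ works, for instance $\Phim=0$.
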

\begin{proof}[Proof sketch]
By Theorem~\ref{thm:dexterity-input}, it suffices to exhibit a prolongation $\ell$ and an index set $\mathcal{O}$ such that, on $\Sigma^{(\ell)}$, the augmented output $\yv_{\overline{\mathcal{O}},\mathcal{A}}$ is flat and its validity set intersects the zero surface of the removed channels. 
Choose $\ell=(2,2,2,0,0,0)$ (two integrators on the force channels) and let $\mathcal{A}=\{i\}$ for $i\in\{1,2,3\}$. 
Table~\ref{tab:melds_fly_robot_compact} lists multiple choices of $\mathcal{O}$ producing flat outputs on $\Sigma^{(\ell)}$ with $(\ell,\mathcal{O})\in\mathcal{C}^{\mathcal{A},\mathbf 0}_{(\Sigma,\yv)}$, which, by the equivalence, proves that $\{i\}$ is a dexterity subset. Minimality (loss one) follows since singletons already satisfy the definition.
\end{proof}

\subsubsection*{Melds and Negotiation Under a Common Prolongation}
Fix $\ell=(2,2,2,0,0,0)$, for which the original output $\yv$ remains flat on $\Sigma^{(\ell)}$, i.e., $\ell\in\mathcal{L}^{\emptyset}_{(\Sigma,\yv)}$ (Section~\ref{subsec:common-prolongation-families}). 
All admissible flat outputs for $\Sigma^{(\ell)}$ built from $\yv$ and (a subset of) the force channels are reported in Table~\ref{tab:melds_fly_robot_compact}, together with the corresponding dexterity subset $\mathcal{A}$. 
As expected, no dexterity subset contains torque inputs (their removal destroys exact linearizability of any nontrivial subset of $\yv$), whereas several valid selections exist when removing one or two force channels. 
Note that some combinations are \emph{not} melds (e.g., $\{\pv,\psi,\theta,f_1\}$) because the associated decoupling matrix is singular—an artifact of the Euler parametrization; the \textit{exclusions} in the rightmost column encode validity‑set conditions under which nonsingularity is preserved.
For instance, an existing platform where the flat output is $\{\pv,\theta,\psi\}$ (i.e., in our setting equivalent to $\#4$) is shown in~\cite{11373888}.

\begin{table}[b]

\caption{Flat outputs for $\Sigma^{(\ell)}$, $\ell=(2,2,2,0,0,0)$ (intrinsic representation) }
\scriptsize
\setlength{\tabcolsep}{2pt}
\label{tab:melds_fly_robot_compact}

\begin{tabularx}{1\linewidth}{@{}c l c c c c c@{}}
\toprule
\# & \makecell{\textbf{Chosen outputs}} & $\mathcal{A}$ & $\mathcal{O}$&
\makecell{\textbf{Mode}} & $p-|\mathcal{A|}$ & \makecell{$\overline{\mathcal{B}}$}\\
\midrule
$1$  & $\pv,\phi,\theta,\psi$ & $\emptyset$ &  $\emptyset$& FM & 6 & $\emptyset$\\
$2$  & $\pv,\phi,\psi,f_1$    & $\{1\}$  &  $\{5\}$    & DF & 5 & $\Exc{-f_3\cos\phi - f_2\sin\phi}$ \\
$3$  & $\pv,\phi,\theta,f_1$  & $\{1\}$  & \{6\}   & DF & 5 & $\Exc{f_2\cos\phi - f_3\sin\phi}$ \\
$4$  & $\pv,\psi,\theta,f_2$  & $\{2\}$ & \{4\}    & DF & 5 & $\Exc{f_3}$ \\
$5$  & $\pv,\phi,\theta,f_2$  & $\{2\}$  & $\{6\}$   & DF & 5 & $\Exc{f_3\sin\theta + f_1\cos\phi\cos\theta}$ \\
$6$  & $\pv,\phi,\psi,f_2$    & $\{2\}$ & $\{5\}$     & DF & 5 & $\Exc{f_1\sin\phi}$ \\
$7$  & $\pv,\phi,\psi,,f_3$    & $\{3\}$ & $\{5\}$      & DF & 5 & $\Exc{f_1\cos\phi}$ \\
$8$  & $\pv,\theta,\psi,f_3$  & $\{3\}$  & $\{4\}$     & DF & 5 & $\Exc{f_2}$ \\
$9$  & $\pv,\phi,\theta,f_3$  & $\{3\}$   & $\{6\}$   & DF & 5 &  $\Exc{f_2\sin\theta + f_1\cos\theta\sin\phi}$ \\
${10}$ & $\pv,\theta,f_{1,2}$ & $\{1,2\}$  & $\{4,6\}$ & QM & 4 & $\Exc{-f_3^2\sin\phi - f_2 f_3\cos\phi}$ \\
${11}$ & $\pv,\theta,f_{2,3}$ & $\{2,3\}$  & $\{4,6\}$ & QM & 4 & $\Exc{f_1(f_2\cos\phi - f_3\sin\phi)}$ \\
${12}$ & $\pv,\theta,f_{1,3}$ & $\{1,3\}$& $\{4,6\}$ &  QM & 4 & $\Exc{f_2(f_2\cos\phi - f_3\sin\phi)}$\\
${13}$ & $\pv,\psi,f_{1,2}$   & $\{1,2\}$ & $\{4,5\}$ & QM & 4 & $\Exc{f_3(f_3\cos\phi + f_2\sin\phi)}$ \\
${14}$ & $\pv,\psi,f_{1,3}$   & $\{1,3\}$ & $\{4,5\}$& QM & 4 & $\Exc{-f_2(f_3\cos\phi + f_2\sin\phi)}$ \\
${15}$ & $\pv,\psi,f_{2,3}$   & $\{2,3\}$ & $\{4,5\}$& QM & 4 & $\Exc{f_1(f_3\cos\phi + f_2\sin\phi)}$ \\
${16}$ & $\pv,\phi,f_{1,2}$   & $\{1,2\}$& $\{5,6\}$ & QM & 4 & C16 \\
${17}$ & $\pv,\phi,f_{1,3}$   & $\{1,3\}$ & $\{5,6\}$& QM & 4 & C17 \\
${18}$ & $\pv,\phi,f_{2,3}$   & $\{2,3\}$ & $\{5,6\}$& QM & 4 & C18 \\
\bottomrule
\end{tabularx}
\vspace{2pt}
\raggedright\scriptsize
\textbf{Legend:} FM = Full Act. Mode, DF = Dual-Force, QM = Quad-Mode.\newline 

C16: $\Exc{f_3(f_1\cos\theta + f_3\cos\phi\sin\theta + f_2\sin\phi\sin\theta)}$;\,\newline
C17: $\Exc{f_2(f_1\cos\theta + f_3\cos\phi\sin\theta + f_2\sin\phi\sin\theta)}$;\,\newline
C18: $\Exc{f_1(f_1\cos\theta + f_3\cos\phi\sin\theta + f_2\sin\phi\sin\theta)}$.
\end{table}

\subsubsection*{Numerical Simulation}\label{subsec:rb-sim}

\arxivtext{%
\begin{figure}[t]
   \centering
   \includegraphics[width=1\linewidth,trim=0.2cm 0cm 0cm 0cm, clip]{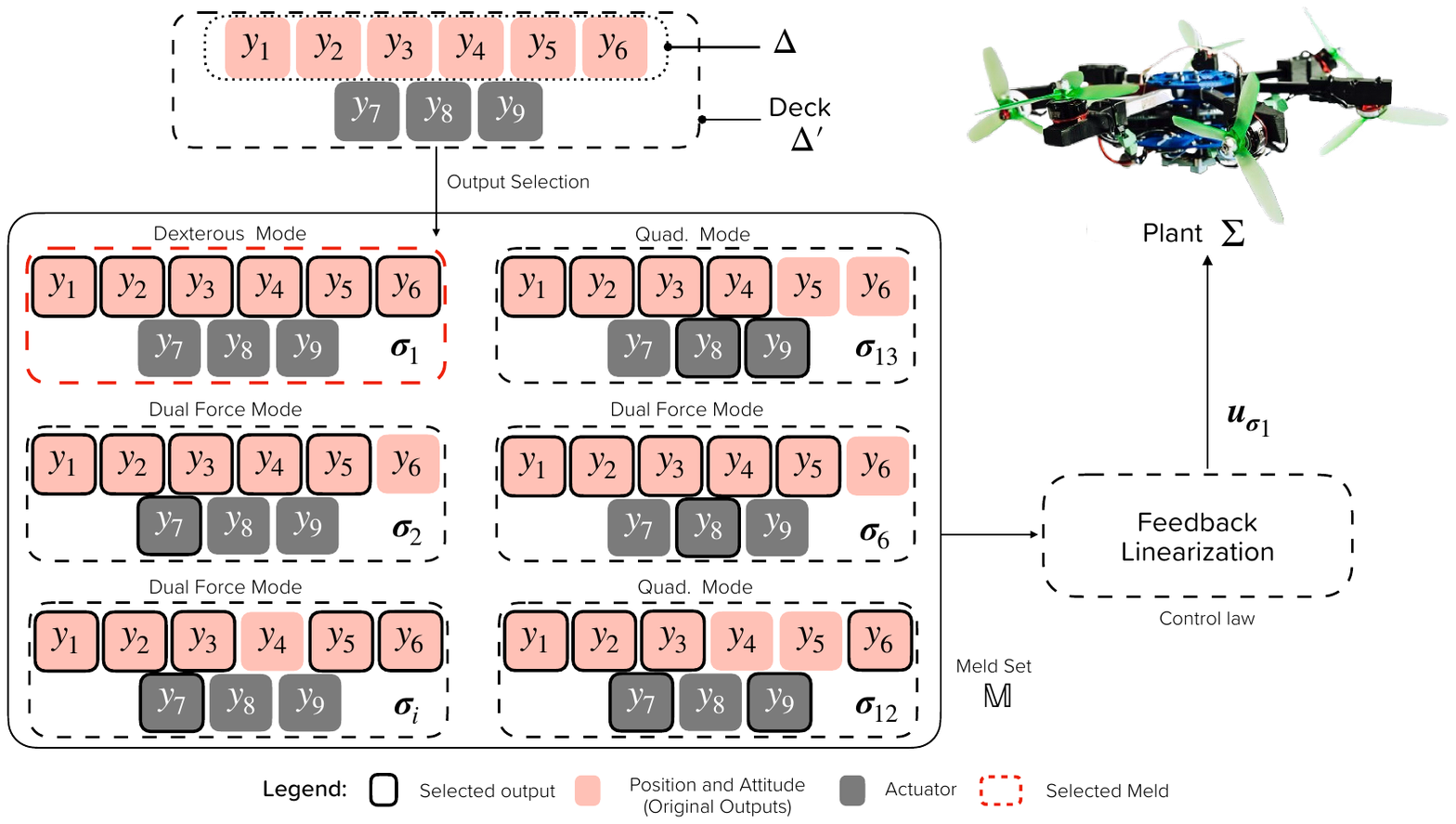}
   \caption{Deck $\Delta^{\ell}$ for $\ell=(2,2,2,0,0,0)$: pose variables plus the three dexterity thrust forces. Nodes are admissible melds (a subset is shown), including the full‑pose meld \#1; edges denote compatibility (overlapping validity sets).}
   \label{fig:meld_picture}
\end{figure}}

\begin{figure}[t]
    \centering
    \includegraphics[trim =0.61cm 0cm 0cm 0cm, clip, width=1.062\linewidth]{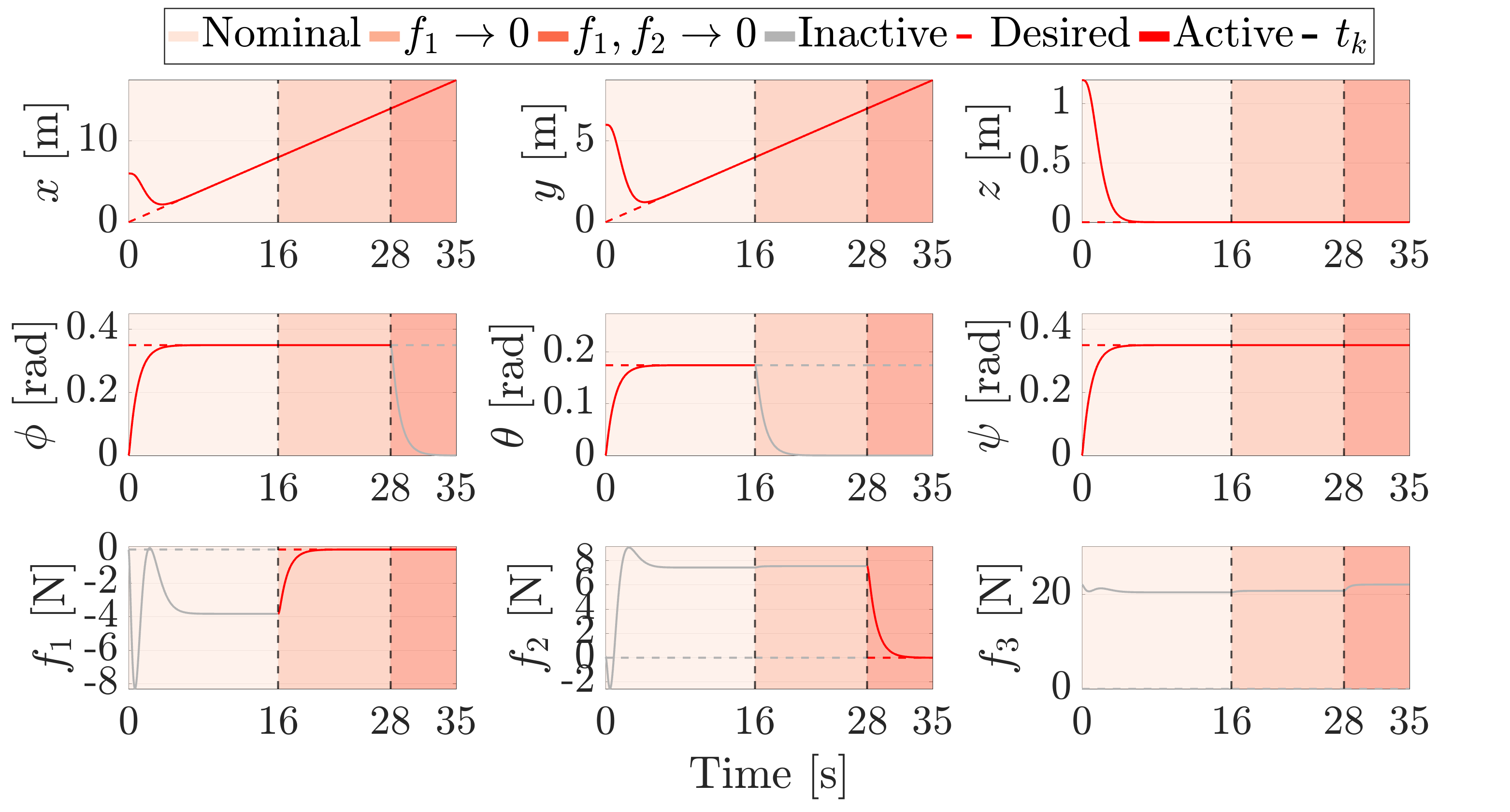}
    \caption{\textbf{Simulation.} Switching among three pairwise compatible melds on $\Sigma^{(\ell)}$: DF (\#2), FM (\#1), and QM (\#13). Dark gray: DF; light coral: FM; light gray: QM. No transients on shared outputs, consistent with Lemma~\ref{lem:zero_transient_common_ell} and Section~\ref{sec:switching-negotiable-common-ell}.}
    \label{fig:sim1_all}
\end{figure}

We simulate a 6‑DoF fully actuated platform required to disable inputs over time, thereby changing its actuation mode. 
The aims are: (i) {robustness} to switching—seamless transitions on outputs shared across melds; (ii) exploitation of the meld structure—executing reduced tasks while keeping selected inputs at zero.

The (stacked) virtual input is chosen as
\begin{equation}
\label{eq:rb-w}
\wv =
\left[ \begin{smallmatrix}
\Km_{y,1}(\pv^d-\pv)+\Km_{y,2}(\dot\pv^d-\dot\pv)+\Km_{y,3}(\ddot\pv^{\,d}-\ddot\pv)+\Km_{y,4}(\dddot\pv^{\,d}-\dddot\pv)+\ddddot\pv^{\,d}\\[2pt]
\Km_{y,5}(\Phim^d - \Phim) + \Km_{y,6}(\dot{\Phim}^d - \dot{\Phim}) + \ddot{\Phim}^d\\[2pt]
-\Km_{u,7}\,\fv - \Km_{u,8}\,\dot{\fv}
         \end{smallmatrix}\right]
\end{equation}
with diagonal gains $\Km_{y,1} = 16\,\mathbf I$, $\Km_{y,2} = 32\,\mathbf I$, $\Km_{y,3} = 24\,\mathbf I$, $\Km_{y,4} = 8\,\mathbf I$, $\Km_{y,5} = 16\,\mathbf I$, $\Km_{y,6} = 8\,\mathbf I$, $\Km_{u,7} = 10\,\mathbf I$, $\Km_{u,8} = 10\,\mathbf I$ (dimensions consistent with the corresponding stacks). 
Initial conditions: $\pv(0)=\begin{bmatrix}6&6&1.2\end{bmatrix}^\top\,\si{\meter}$, $\dot\pv(0)=\mathbf 0\,\si{\meter\per\second}$, and $\Rm(0)=\mathbf I_{3\times 3}$. 
The reference is $\pv^d(t)=\big[\tfrac{t}{2}\ \ \tfrac{t}{4}\ \ 0\big]^\top\,\si{\meter}$ and fixed attitude $\psi^d=20^{\circ}$, $\theta^d=10^{\circ}$, $\phi^d=20^{\circ}$. 
The switching signal $\eta(t)$ moves along edges of $\mathcal{G}^\ell_\star$ with a fixed dwell time; as predicted by Section~\ref{sec:switching-negotiable-common-ell}, Fig.~\ref{fig:sim1_all} shows seamless negotiation between tasks with no transients on shared outputs. Starting from the Fully Actuated Mode (FM), where the flat output is the full pose $\yv$, the platform transitions at $t=16\,\mathrm{s}$ to a Dual-Force Mode (DF), where the active flat output becomes $[\pv^\top\;\phi\;\psi\; f_1]^\top$. 
It then transitions to the quadrotor mode (QM), where the flat output is $[\pv^\top\;\psi\;f_1\, f_2]^\top$.

\section{Application to a Mecanum Wheel Robot}\label{sec:mec_wheel}
In our prior work~\cite{2025a-MizGooFra}, we studied a mecanum-wheel  platform  with flat output $\yv=[x\;y\;\theta]^\top$. It was observed in~\cite{7496410} that sustained use of the lateral velocity input (denoted with $v_3$) can be energetically expensive, motivating the question of whether this input could be deactivated while preserving exact linearizability of a reduced task.
Although, the terminology of dexterity input was not introduced there, the analysis implicitly considered the 
 prolonged system with  pattern $\ell=\{1,0,1\}$ and exhibited two compatible flat outputs for the same prolonged system, namely $\yv$ and $[x\; y\; v_3]^\top$. A switching controller between these outputs which corresponds to the case $d_\ell = 2$ of the controller \eqref{eq:switching-law}  was designed. In the notation of the present paper, this establishes that the index corresponding to the lateral velocity input belongs to a dexterity subset of cardinality one, i.e., $\delta^3_{(\Sigma,\yv)} = 1.$

\section{Under-actuation and Fully-Actuated Systems}\label{sec:ua-fa}
Finally, after introducing the formal terminology and analyzing two representative mechanical platforms, we can articulate a unifying interpretation.

In both examples, we exhibited flat outputs that include some dexterity inputs alongside components of the original full-dimensional task. Each such selection corresponds to what is classically regarded as a different mechanical system. For instance, in the mecanum-wheel platform, the flat output $[x\; y\; v_3]^\top$ corresponds to the standard unicycle model~\cite{DELUCA2000687}, which can now be interpreted as the fully actuated system with the lateral input as part of a flat output. Likewise, for the fully actuated aerial platform, flat outputs containing  the lateral thrusts inputs recover the classical quadrotor model~\cite{2018c-FaeFraSca}; this too can be viewed as the same fully actuated system with selected lateral force inputs being part of a flat output.

More generally, these examples suggest a structural viewpoint: there is a single underlying system. Classical "underactuated" models~\cite{aneke2003control} arise as particular output selections of a suitably prolonged fully actuated system, corresponding to the removal (or regulation to zero) of dexterity inputs. Thus, different actuation modes are not distinct plants, but manifestations of a common prolonged dynamics under different flat-output configurations.
    
\section{Conclusions and Future Work}\label{subs:conclusion}

We presented a task–relative framework for nonlinear input–affine systems that (i) \emph{classifies} actuator inputs with respect to a given flat output $\yv$, and (ii) \emph{negotiates} between full and reduced tasks under actuator deactivation via a common dynamic prolongation. 
Relative to $(\Sigma,\yv)$ we introduced the taxonomy \emph{redundant / essential / dexterity} (Section~\ref{sec:io-class}, Defs.~\ref{defn:dextsubset}–\ref{defn_dext_ess}). 
Our main theoretical result (Theorem~\ref{thm:dexterity-input}) established an equivalence: a subset of inputs is \emph{dexterity} if and only if—on a nonempty intersection of validity sets—there exists a prolongation $\ell$ for which that subset can be embedded as a \emph{flat‑input complement} in the output of the prolonged system.

As a consequence, whenever a common prolongation exists, the actuation modes obtained by (de)activating dexterity inputs can be interpreted as a \emph{single} prolonged plant endowed with different flat-output selections (Section~\ref{sec:common-prolongation}).
This yields a unified linearizing controller that switches along edges of the negotiability graph without inducing transients on the shared output components (Lemma~\ref{lem:zero_transient_common_ell}, Section~\ref{sec:switching-negotiable-common-ell}). 
A rigid‑body flight case study (Section~\ref{sec:rigid-body}) classified thrust channels as dexterity inputs, listed representative melds, and demonstrated seamless downgrades among fully actuated, dual‑force, and quad modes.

\subsubsection*{Limitations.}
The approach relies on exact state‑space linearization assumptions:  for a fixed prolongation, constant relative degree and a nonsingular decoupling matrix on the operating validity sets. 
It therefore inherits sensitivity to model uncertainty and parametric drift; compatibility requires nonempty intersection of validity sets; and switching uses a dwell‑time condition. 
For Euler angles, attitude singularities restrict global coverage of validity sets.

\subsubsection*{Future directions.}
\begin{itemize}
\item \emph{Robust/adaptive negotiation.} Incorporate model‑uncertainty margins (e.g., ISS/robust linearization filters) and adaptive gain scheduling for the prolonged channels; synthesize dwell‑time rules tied to closed‑loop poles and perturbation bounds. 
\item \emph{Constraints and NMPC.} Embed input/state bounds and rate limits into a common‑prolongation NMPC layer that preserves the zero‑transient property on shared outputs.
\item \emph{Beyond fully flat outputs.} Extend the classification to partially feedback‑linearizable systems and outputs with nonminimum‑phase zeros; automate the computation of minimal prolongations $\ell$ using pure‑prolongation conditions.
\item \emph{Global attitude and manifolds.} Replace Euler angles with representations on $\SO)$ (or quaternions) and generalize to tasks on manifolds with intrinsic validity‑set characterization. 
\item \emph{Switching logic and verification.} Synthesize negotiability graphs from data, compute compatibility regions, and design formally verified switching policies (e.g., temporal‑logic constraints) with safety certificates. 
\item \emph{Experimental validation.} Validate on fully actuated aerial platforms and multi‑robot scenarios (cooperative load transport), including energy‑aware scheduling of dexterity‑input shutdowns.
\end{itemize}
Overall, the proposed equivalence and common‑prolongation viewpoint unify actuation modes under one controller, offering principled, zero‑transient task downgrades when selected inputs are lost or deliberately deactivated.

\bibliographystyle{IEEEtran}
\bibliography{Bib/bibAlias,Bib/bibAF,Bib/bibCustom,Bib/ref2}

\arxivtext{
\appendix
\begin{table}[h!]
\caption{Meld Classification for Flying Robot Platform using Extrinsic RPY Angles}
\scriptsize
\setlength{\tabcolsep}{2pt}
\begin{tabularx}{1\linewidth}{@{}c l c c c c@{}}
\toprule
\textbf{Meld} & \makecell{\textbf{Chosen outputs}} & $\mathcal{A}_{\sigmav}$ &
\makecell{\textbf{Mode}} & $\delta_U$ & \makecell{$\overline{\mathcal{B}_{\sigmav}}$} \\
\midrule
$\sigmav_1$  & $\pv,\psi,\phi,\theta$ & $\emptyset$ & FM & 6 & $\emptyset$ \\
$\sigmav_2$  & $\pv,\psi,\phi,f_1$    & $\{f_1\}$ & DF & 5 & C2 \\
$\sigmav_3$  & $\pv,\psi,\phi,f_2$    & $\{f_2\}$ & DF & 5 & C3 \\
$\sigmav_4$  & $\pv,\psi,\phi,f_3$    & $\{f_3\}$ & DF & 5 & C4 \\
$\sigmav_5$  & $\pv,\psi,\theta,f_1$  & $\{f_1\}$ & DF & 5 & C5 \\
$\sigmav_6$  & $\pv,\psi,\theta,f_2$  & $\{f_2\}$ & DF & 5 & C6 \\
$\sigmav_7$  & $\pv,\psi,\theta,f_3$  & $\{f_3\}$ & DF & 5 & C7 \\
$\sigmav_8$  & $\pv,\phi,\theta,f_1$  & $\{f_1\}$ & DF & 5 & C8 \\
$\sigmav_9$  & $\pv,\phi,\theta,f_2$  & $\{f_2\}$ & DF & 5 & C9 \\
$\sigmav_{10}$ & $\pv,\phi,\theta,f_3$ & $\{f_3\}$ & DF & 5 & C10 \\
$\sigmav_{11}$ & $\pv,\psi,f_1,f_2$    & $\{f_1,f_2\}$ & QM & 4 & C11 \\
$\sigmav_{12}$ & $\pv,\psi,f_1,f_3$    & $\{f_1,f_3\}$ & QM & 4 & C12 \\
$\sigmav_{13}$ & $\pv,\psi,f_2,f_3$    & $\{f_2,f_3\}$ & QM & 4 & C13 \\
$\sigmav_{14}$ & $\pv,\phi,f_1,f_2$    & $\{f_1,f_2\}$ & QM & 4 & C14 \\
$\sigmav_{15}$ & $\pv,\phi,f_1,f_3$    & $\{f_1,f_3\}$ & QM & 4 & C15 \\
$\sigmav_{16}$ & $\pv,\phi,f_2,f_3$    & $\{f_2,f_3\}$ & QM & 4 & C16 \\
$\sigmav_{17}$ & $\pv,\theta,f_1,f_2$  & $\{f_1,f_2\}$ & QM & 4 & C17 \\
$\sigmav_{18}$ & $\pv,\theta,f_1,f_3$  & $\{f_1,f_3\}$ & QM & 4 & C18 \\
$\sigmav_{19}$ & $\pv,\theta,f_2,f_3$  & $\{f_2,f_3\}$ & QM & 4 & C19 \\
\bottomrule
\end{tabularx}

\vspace{2pt}
\raggedright\scriptsize
\textbf{Legend:} FM = Full Act. Mode, DF = Dual-Force, QM = Quad-Mode.\newline
C2: $-(f_3\cos\phi\cos\psi - f_2\cos\theta\sin\phi + f_3\sin\phi\sin\psi\sin\theta)$;\,
C3: $-(f_3\cos\phi\sin\psi + f_1\cos\theta\sin\phi - f_3\cos\psi\sin\phi\sin\theta)$;\,
C4: $(f_1\cos\phi\cos\psi + f_2\cos\phi\sin\psi - f_2\cos\psi\sin\phi\sin\theta + f_1\sin\phi\sin\psi\sin\theta)$;\,
C5: $(f_2\sin\theta + f_3\cos\theta\sin\psi)$;\,
C6: $-(f_1\sin\theta + f_3\cos\psi\cos\theta)$;\,
C7: $(f_2\cos\psi - f_1\sin\psi)\cos\theta$;\,
C8: $(f_2\cos\phi\cos\theta + f_3\cos\psi\sin\phi - f_3\cos\phi\sin\psi\sin\theta)$;\,
C9: $(f_3\sin\phi\sin\psi - f_1\cos\phi\cos\theta + f_3\cos\phi\cos\psi\sin\theta)$;\,
C10: $-(f_1\cos\psi\sin\phi + f_2\sin\phi\sin\psi + f_2\cos\phi\cos\psi\sin\theta - f_1\cos\phi\sin\psi\sin\theta)$;\,
C11: $(f_3(f_3\cos\phi\cos\theta - f_2\cos\psi\sin\phi + f_1\sin\phi\sin\psi + f_1\cos\phi\cos\psi\sin\theta + f_2\cos\phi\sin\psi\sin\theta))$;\,
C12: $-(f_2(f_3\cos\phi\cos\theta - f_2\cos\psi\sin\phi + f_1\sin\phi\sin\psi + f_1\cos\phi\cos\psi\sin\theta + f_2\cos\phi\sin\psi\sin\theta))$;\,
C13: $(f_1(f_3\cos\phi\cos\theta - f_2\cos\psi\sin\phi + f_1\sin\phi\sin\psi + f_1\cos\phi\cos\psi\sin\theta + f_2\cos\phi\sin\psi\sin\theta))$;\,
C14: $(f_3(f_1\cos\psi\cos\theta - f_3\sin\theta + f_2\cos\theta\sin\psi))$;\,
C15: $-(f_2(f_1\cos\psi\cos\theta - f_3\sin\theta + f_2\cos\theta\sin\psi))$;\,
C16: $(f_1(f_1\cos\psi\cos\theta - f_3\sin\theta + f_2\cos\theta\sin\psi))$;\,
C17: $(f_3(f_2\cos\phi\cos\psi - f_1\cos\phi\sin\psi + f_3\cos\theta\sin\phi + f_1\cos\psi\sin\phi\sin\theta + f_2\sin\phi\sin\psi\sin\theta))$;\,
C18: $-(f_2(f_2\cos\phi\cos\psi - f_1\cos\phi\sin\psi + f_3\cos\theta\sin\phi + f_1\cos\psi\sin\phi\sin\theta + f_2\sin\phi\sin\psi\sin\theta))$;\,
C19: $(f_1(f_2\cos\phi\cos\psi - f_1\cos\phi\sin\psi + f_3\cos\theta\sin\phi + f_1\cos\psi\sin\phi\sin\theta + f_2\sin\phi\sin\psi\sin\theta))$.
\end{table}
}

\end{document}